\numberwithin{equation}{section}
\theoremstyle{plain}
\newtheorem{theorem}{Theorem}[section]
\newtheorem{proposition}{Proposition}[section]
\newtheorem{corollary}{Corollary}[section]
\newcommand{\E}{\mathbb E}
\newcommand{\calT}{\mathcal T}
\newcommand{\HGP}{N^+}
\newcommand{\MHGP}{\mathbf{N^\star}}
\newcommand{\widetildeHGP}{\widetilde N^+}
\newcommand{\HG}{\text{HyperGeo}}
\newcommand{\MHG}{\text{MultHyperGeo}}
\newcommand{\BB}{\text{BetaBin}}
\newcommand{\DM}{\text{DirMult}}
\DeclareMathOperator{\Fcal}{\mathcal F}
\titlespacing\section{0pt}{6pt plus 2pt minus 2pt}{2pt plus 1pt minus 1pt}
\titlespacing\subsection{0pt}{6pt plus 2pt minus 2pt}{1pt plus 1pt minus 1pt}
\titlespacing\subsubsection{0pt}{5pt plus 2pt minus 2pt}{2pt plus 1pt minus 1pt}
\titlespacing\paragraph{0pt}{3pt plus 0pt minus 0pt}{4pt plus 0pt minus 0pt}
\title{Confidence sequences for sampling without replacement}
\author{%
Ian Waudby-Smith$^1$ and Aaditya Ramdas$^{12}$\vspace{0.05in}\\
  Departments of Statistics$^1$ and Machine Learning$^2$\\
  Carnegie Mellon University\\
  \texttt{\{ianws, aramdas\}@cmu.edu} \\
}
\begin{document}
\maketitle

\begin{abstract}
Many practical tasks involve sampling sequentially without replacement (WoR) from a finite population of size $N$, in an attempt to estimate some parameter $\theta^\star$. Accurately quantifying uncertainty throughout this process is a nontrivial task, but is necessary because it often determines when we stop collecting samples and confidently report a result. We present a suite of tools for designing \textit{confidence sequences} (CS) for $\theta^\star$. A CS is a sequence of confidence sets $(C_n)_{n=1}^N$, that shrink in size, and all contain $\theta^\star$ simultaneously with high probability. We present a generic approach to constructing a frequentist CS using Bayesian tools, based on the fact that the ratio of a prior to the posterior at the ground truth is a martingale. We then present Hoeffding- and empirical-Bernstein-type time-uniform CSs and fixed-time confidence intervals for sampling WoR, which improve on previous bounds in the literature and explicitly quantify the benefit of WoR sampling.
 
\end{abstract}


\section{Introduction}

When data are collected sequentially rather than in a single batch with a fixed sample size, many classical statistical tools cannot naively be used to calculate uncertainty as more data become available. Doing so can quickly lead to overconfident and incorrect results (informally, ``peeking, $p$-hacking''). For these kinds of situations, the analyst would ideally have access to procedures that allow them to:
\begin{enumerate}[(a)]
	\itemsep0em 
	\item Efficiently calculate tight confidence intervals  whenever new data become available;
	\item Track the intervals, and use them to decide whether to continue sampling, or when to stop;
	\item Have valid confidence intervals (or $p$-values) at arbitrary data-dependent stopping times. 
\end{enumerate}
The desire for methods satisfying (a), (b), and (c) led to the development of \textit{confidence sequences}~(CS) --- sequences of confidence sets which are uniformly valid over a given time horizon $\calT$. Formally, a sequence of sets $\{ C_t \}_{t \in \calT}$ is a $(1-\alpha)$-CS for some parameter $\theta^\star$ if
\begin{equation}\label{eq::CS-defn}
\Pr(\forall t \in \calT,\ \theta^\star \in C_t ) \geq 1-\alpha ~~~ \equiv ~~~ \Pr(\exists t \in \calT : \theta^\star \notin C_t) \leq \alpha. 
\end{equation}
Critically, \eqref{eq::CS-defn} holds iff $\Pr(\theta^\star \notin C_\tau) \leq \alpha$ for arbitrary stopping times $\tau$ \cite{howard2018uniform}, yielding property~(c).
The foundations of CSs were laid by Robbins, Darling, Siegmund \& Lai \cite{darling1967confidence,lai1976confidence,robbins_boundary_1970,lai_boundary_1976}. The multi-armed bandit literature sometimes calls them `anytime' confidence intervals~\cite{jamieson_lil_2014,kaufmann2018mixture}. CSs have recently been developed for a variety of nonparametric problems \cite{howard2018uniform,wasserman2019universal,howard2019sequential}. 

This paper derives closed-form CSs when samples are drawn without replacement (WoR) from a finite population. The technical underpinnings are novel (super)martingales for both categorical (Section~\ref{section:discreteCategorical}) and continuous (Section~\ref{section:boundedRealValued}) observations. In the latter setting, our results unify and improve on the time-uniform with-replacement extensions of Hoeffding's \cite{hoeffding_probability_1963} and empirical Bernstein's inequalities by~\citet{maurer_empirical_2009}  that have been derived recently~\cite{howard_exponential_2018,howard2018uniform}, with several related inequalities for sampling WoR by Serfling~\cite{serfling1974probability} and extensions by \citet{bardenet2015concentration} and \citet{greene2017exponential}.

\begin{figure}
    \centering
    \includegraphics[width=0.7\textwidth]{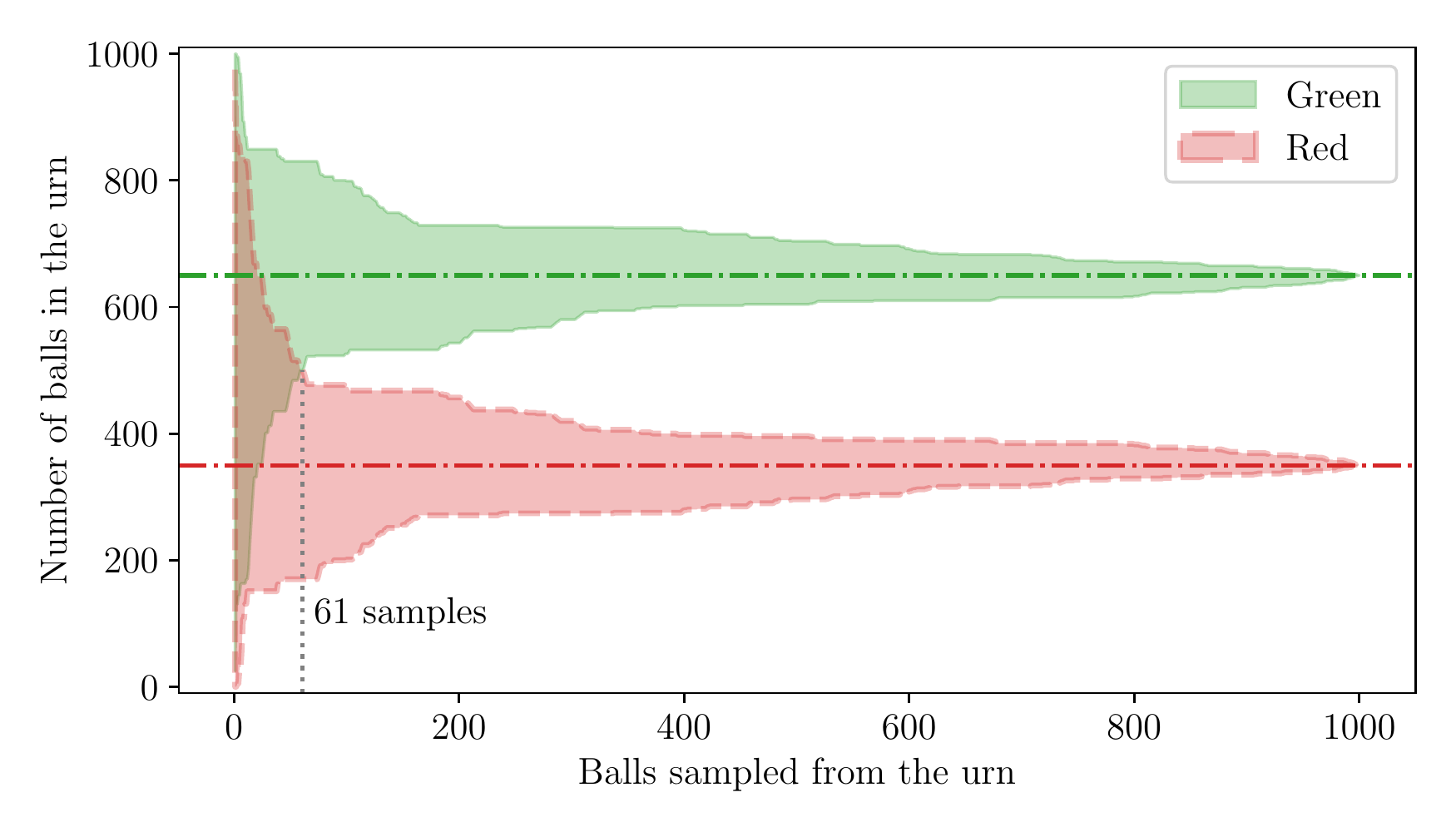}
    \caption[95\% CS for the number of green and red balls in an urn by sampling WoR. Notice that the true totals (650 green, 350 red) are captured by the CSs uniformly from the initial sample until all 1000 balls are observed. After sampling 61 balls in this particular example, the CSs cease to overlap, and we can conclude with 95\% confidence that there more green than red balls in the urn]{95\% CS for the number of green and red balls in an urn by sampling WoR\footnotemark. Notice that the true totals (650 green, 350 red) are captured by the CSs uniformly over time from the initial sample until all 1000 balls are observed. After sampling 61 balls in this example, the CSs cease to overlap, and we can conclude with 95\% confidence that there are more green than red balls in the urn.}
    \label{fig:CS}
\end{figure}
\footnotetext{Code to reproduce plots is available at \href{https://github.com/WannabeSmith/confseq_wor}{github.com/wannabesmith/confseq\_wor}.}

\textbf{Outline.}
In Section~\ref{section:discreteCategorical}, we use Bayesian ideas to obtain frequentist CSs for categorical observations. In Section~\ref{section:boundedRealValued}, we construct CSs for the mean of a finite set of bounded real numbers. We discuss implications for testing in Section~\ref{sec:testing}. Some prototypical applications are described in Appendix~\ref{section:fourMotivatingExamples}. The other appendices contain proofs, choices of tuning parameters, and computational considerations.

\subsection{Notation, supermartingales and the model for sampling WoR}
\label{section:observationModel}
Everywhere in this paper, the $N$ objects in the finite population $\{x_1,\dots,x_N\}$ are fixed and nonrandom. In the discrete setting (Section~\ref{section:discreteCategorical}) with $K\geq 2$ categories $\{c_k\}_{k=1}^K$, we have $x_i \in \{c_1, c_2,\dots, c_K\}$. In the continuous setting (Section~\ref{section:boundedRealValued}), $x_i \in [\ell, u]$ for some known bounds $\ell < u$. What is random is only the order of observation; the model for sampling uniformly at random WoR posits that
\begin{equation}\label{eq:sampling-wo-replacement}
X_t \mid \{X_1,\dots,X_{t-1}\} \sim \text{Uniform}(\{x_1,\dots,x_N\} \backslash \{X_1,\dots,X_{t-1}\}).
\end{equation}
All probabilities in this paper are to be understood as solely arising from observing fixed entities in a random order, with no distributional assumptions being made on the finite population. It is worth remarking on the power of this randomization---as demonstrated in our experiments, one can estimate the average of a deterministic set of numbers to high accuracy without observing a large fraction of the set.

The results in this paper draw from the theory of \textit{supermartingales}. While they can be defined in more generality, we provide a definition of supermartingales which will suffice for the theorems that follow.

A filtration is an increasing sequence of sigma fields.
For the entirety of this paper, we consider the `canonical' filtration $(\mathcal F_t)_{t =0}^N$ defined by $\mathcal F_t := \sigma(X_1, \dots, X_t)$, with $\mathcal F_0$ is the empty or trivial sigma field.
For any fixed $N \in \mathbb N$, a stochastic process $(M_t)_{t=0}^N$ is said to be a \textit{supermartingale} with respect to $(\mathcal F_t)_{t=0}^N$ if for all ${t \in \{0, 1, \dots, N-1\}}$, $M_t$ is measurable with respect to $\mathcal F_t$ (informally, $M_t$ is a function of $X_1,\dots,X_t$), and
\[ \E(M_{t+1} \mid \mathcal F_t) \leq M_t. \]
If the above inequality is replaced by an equality for all $t$, then $(M_t)_{t=0}^N$ is said to be a \textit{martingale}.

For succinctness, we use the notation $a_1^t := \{a_1,\dots,a_t\}$ and $[a]:=\{1,\dots,a\}$. Using this terminology, one can rewrite model \eqref{eq:sampling-wo-replacement} as positing that $X_t \mid \mathcal{F}_{t-1} \sim \text{Uniform}(x_1^N \backslash X_1^{t-1})$.

\section{Discrete categorical setting}
\label{section:discreteCategorical}

When observations are of this discrete form, the variables can be rewritten in such a way that they follow a hypergeometric distribution. In such a setting, the following ``prior-posterior-ratio martingale'' can be used to obtain CSs for parameters of the hypergeometric distribution which shrink to a single point after all data have been observed.

\subsection{The prior-posterior-ratio (PPR) martingale}

While the PPR martingale will be particularly useful for obtaining CSs when sampling discrete categorical random variables WoR from a finite population, it may be employed whenever one is able to compute a posterior distribution, and is certainly \emph{not limited to this paper's setting}. Moreover, this posterior distribution need not be computed in closed form, and computational techniques such as Markov Chain Monte Carlo may be employed when a conjugate prior is not available or desirable.

To avoid confusion, we emphasize that while we make use of terminology from Bayesian inference such as posteriors and conjugate priors, all of the probability statements with regards to CSs should be read in the frequentist sense, and are not interpreted as sequences of credible intervals.

Consider any family of distributions $\{F_\theta\}_{\theta \in \Theta}$ with density $f_\theta$ with respect to some underlying common measure (such as Lebesgue for continuous cases, counting measure for discrete cases).
Let $\theta^\star \in \Theta$ be a fixed parameter and let $\mathcal T = [N]$ where $N \in \mathbb N \cup \{\infty \}$. 
Suppose that $X_1 \sim f_{\theta^\star}(x)$ and
\[
X_{t+1} \sim f_{\theta^\star}(x \mid X_1^t) \quad \text{ for all $t \in \mathcal T$}.
\]
Let $\pi_0(\theta)$ be a prior distribution on $\Theta$, with posterior given by
\[ \pi_{t}(\theta) = \frac{\pi_0(\theta) f_{\theta}(X_1^t)}{\int_{\eta \in \Theta} \pi_0(\eta) f_\eta (X_1^t) d\eta}.\]
To prepare for the result that follows, define the \textit{prior-posterior ratio (PPR)} evaluated at $\theta \in \Theta$ as
\[ R_t(\theta) := \frac{\pi_0(\theta)}{\pi_t(\theta)}. \]

\begin{proposition}[Prior-posterior-ratio martingale]
\label{prop:priorPosteriorMartingale}
    For any prior $\pi_0$ on $\Theta$ that assigns nonzero mass everywhere, the sequence of prior-posterior ratios evaluated at the true $\theta^\star$, that is $(R_t(\theta^\star))_{t=0}^N$, 
    is a nonnegative martingale with respect to $(\mathcal{F}_t)_{t=0}^N$.
    Further, the sequence of sets  \[C_t := \{ \theta \in \Theta : R_t(\theta) < 1/\alpha \} \]
    forms a $(1-\alpha)$-CS for $\theta^\star$, meaning that $\Pr(\exists t \in \mathcal T: \theta^\star \notin C_t) \leq \alpha$.
\end{proposition}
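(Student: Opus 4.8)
The plan is to recognize $(R_t(\theta^\star))_t$ as the prior-predictive-to-true-likelihood ratio, show it is a nonnegative martingale started at $1$, and then read off the coverage guarantee from Ville's maximal inequality. First I would rewrite the PPR in closed form: by Bayes' theorem the posterior is $\pi_t(\theta) = \pi_0(\theta) f_\theta(X_1^t) / m_t$, where $m_t := \int_\Theta \pi_0(\eta) f_\eta(X_1^t)\,d\eta$ is the marginal likelihood of the data under the prior, so that $R_t(\theta) = \pi_0(\theta)/\pi_t(\theta) = m_t / f_\theta(X_1^t)$. The hypothesis that $\pi_0$ places positive mass everywhere guarantees this ratio is well defined at every $\theta$, and in particular at $\theta^\star$, where $f_{\theta^\star}(X_1^t) > 0$ almost surely since the data are generated from $f_{\theta^\star}$. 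Note $R_0(\theta^\star) = 1$ because $\pi_0$ integrates to one, and $R_t(\theta^\star) \ge 0$ and is $\mathcal F_t$-measurable by construction.

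For the martingale property I would use a one-step factorization. Writing both the marginal and the true joint densities as products of their conditionals, $m_{t+1} = m_t \cdot \overline m_{t+1}$ and $f_{\theta^\star}(X_1^{t+1}) = f_{\theta^\star}(X_1^t)\, f_{\theta^\star}(X_{t+1}\mid X_1^t)$, where $\overline m_{t+1} := \int_\Theta \pi_t(\eta)\, f_\eta(X_{t+1}\mid X_1^t)\,d\eta = m_{t+1}/m_t$ is the posterior-predictive density evaluated at $X_{t+1}$; hence $R_{t+1}(\theta^\star) = R_t(\theta^\star)\cdot \overline m_{t+1} / f_{\theta^\star}(X_{t+1}\mid X_1^t)$. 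Since $R_t(\theta^\star)$ is $\mathcal F_t$-measurable and, by the model assumption, $X_{t+1}\mid\mathcal F_t \sim f_{\theta^\star}(\cdot\mid X_1^t)$, taking conditional expectations gives
\[ \E\bigl(R_{t+1}(\theta^\star)\mid\mathcal F_t\bigr) \;=\; R_t(\theta^\star)\int \frac{\overline m_{t+1}(x)}{f_{\theta^\star}(x\mid X_1^t)}\, f_{\theta^\star}(x\mid X_1^t)\,dx \;=\; R_t(\theta^\star)\int \overline m_{t+1}(x)\,dx \;=\; R_t(\theta^\star), \]
the last step because the posterior predictive integrates to one (Tonelli justifies the interchange, the integrand being nonnegative). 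Thus $(R_t(\theta^\star))_{t=0}^N$ is a nonnegative martingale with $\E R_0(\theta^\star) = 1$.

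Finally, the CS statement is exactly Ville's inequality: for a nonnegative supermartingale $(M_t)$ with $\E M_0 \le 1$, $\Pr(\exists t : M_t \ge 1/\alpha) \le \alpha$; applied to $M_t = R_t(\theta^\star)$ this reads $\Pr(\exists t\in\calT : \theta^\star\notin C_t) = \Pr(\exists t\in\calT : R_t(\theta^\star)\ge 1/\alpha) \le \alpha$. I expect the only genuinely delicate point to be measure-theoretic bookkeeping: making sure $R_t(\theta)$ is everywhere well defined and finite, and noticing that the ``$\int \overline m_{t+1} = 1$'' step really only needs $\int_{\{f_{\theta^\star}(\cdot\mid X_1^t) > 0\}} \overline m_{t+1} \le 1$, so in full generality one gets a supermartingale, upgrading to a martingale when the $f_\theta$ share a common support --- but since Ville's inequality only uses the supermartingale property, this has no bearing on the conclusion.
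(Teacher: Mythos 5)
Your proof is correct and follows essentially the same route as the paper's: both establish that $R_t(\theta^\star)$ equals the prior-predictive (mixture) likelihood over the true likelihood, verify the martingale property via Bayes' rule and a Fubini/Tonelli interchange (your one-step factorization through the posterior predictive is just a cleaner packaging of the paper's expanded integral computation), and then conclude coverage from Ville's inequality applied to the nonnegative (super)martingale started at $1$. Your closing remark about the interchange only yielding a supermartingale without common support is a fair refinement, and as you note it does not affect the conclusion.
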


The proof is given in Appendix \ref{proof:priorPosteriorMartingale}.

Going forward, we adopt the label \emph{working} before `prior' and `posterior' and encase them in `quotes' to emphasize that they constitute part of a Bayesian `working model', to contrast it against an assumed Bayesian model; the latter would be inappropriate given the discussion in Section~\ref{section:observationModel}. Next, we apply this result to the hypergeometric distribution. We will later examine the practical role of this working prior.

\subsection{CSs for binary settings using the hypergeometric distribution}
\label{section:binaryrvwithoutreplace}
Recall that a random variable $X$ has a hypergeometric distribution with parameters $(N, \HGP,n)$ if it represents the number of ``successes'' in $n$ random samples WoR from a population of size $N$ in which there are $\HGP$ such successes, and each observation is either a success or failure (1 or 0). The probability of a particular number of successes $x \in \{0, 1, \dots, \min(\HGP, n) \}$ is 
\[
{\small
\Pr(X = x) = \tbinom{N^+}{x} \tbinom{N-\HGP}{n - x} / \tbinom{N}{n}.}
\]
For notational simplicity, we consider the case when $n = 1$, that is we make one observation at a time, but this is not a necessary restriction. In fact, one would obtain the same CS at time ten if we repeatedly make one observation ten times, or make ten observations in one go. 
For a moment, let us view this problem from the Bayesian perspective, treating the fixed parameter $\HGP$ as a random parameter, which we call $\widetildeHGP$ to avoid confusion. We choose a beta-binomial `working prior' on $\widetildeHGP$ as it is conjugate to the hypergeometric distribution up to a shift in $\widetildeHGP$ \cite{fink1997compendium}. Concretely, suppose
\begin{align*}
\small
X_{t} \mid (\widetildeHGP, X_1, \dots, X_{t-1}) &\sim \HG \left(N - (t - 1), \widetildeHGP - \sum_{i=1}^{t-1} X_i, 1 \right),  \\
\widetildeHGP &\sim \BB(N, a, b),
\end{align*}
for some $a, b >0$. Then for any $t \in [N]$, the `working posterior' for $\widetildeHGP$ is given by 
\[
\small
\widetildeHGP - \sum_{i = 1}^t X_i \mid X_1^t  ~ \sim ~ \BB \Bigg(N-t, a + \sum_{i=1}^tX_i, b + t - \sum_{i=1}^t X_i \Bigg).
\]
Now that we have `prior' and `posterior' distributions for $\widetildeHGP$, an application of the prior-posterior martingale (Proposition~\ref{prop:priorPosteriorMartingale}) yields a CS for the true $\HGP$, summarized in the following theorem.

\begin{theorem}[CS for binary observations]
	\label{theorem:hyperGeoConfseq}
	Suppose $x_1^N \in \{0,1\}^N$ is a nonrandom set with the number of successes $\sum_{i=1}^N x_i \equiv \HGP$ fixed and unknown. Under observation model \eqref{eq:sampling-wo-replacement}, we have
	\[ X_{t} \mid X_1^{t-1} \sim \HG\Bigg (N - (t - 1), \HGP - \sum_{i=1}^{t-1} X_i, 1\Bigg).\]
For any beta-binomial `prior' $\pi_0$ for $\HGP$ with parameters $\smash{a,b>0}$ and induced `posterior' $\pi_t$,
\[C_t := \Bigg \{ n^+ \in [N] : \frac{\pi_0( n^+)}{\pi_t( n^+)} < \frac{1}{\alpha} \Bigg\}\]
is a $(1-\alpha)$-CS for $N^+$. Further, the running intersection, $( \bigcap_{s \leq t} C_t )_{t \in [N]}$ is also a valid $(1-\alpha)$-CS.

\end{theorem}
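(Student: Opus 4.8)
The plan is to derive the theorem from the prior-posterior-ratio martingale (Proposition~\ref{prop:priorPosteriorMartingale}) once the hypergeometric WoR model has been matched to its hypotheses; the running-intersection claim is then a one-line observation.

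\emph{Step 1 (the conditional law).} First I would verify the displayed distribution $X_t \mid X_1^{t-1} \sim \HG(N-(t-1),\, \HGP - \sum_{i=1}^{t-1}X_i,\, 1)$. Conditioned on $\mathcal F_{t-1}$, the $N-(t-1)$ not-yet-observed objects form a sub-population containing exactly $\HGP - \sum_{i=1}^{t-1}X_i$ successes, and by \eqref{eq:sampling-wo-replacement} the next observation $X_t$ is a single uniform draw from it; a single uniform WoR draw from such a sub-population is by definition a $\HG(\cdot,\cdot,1)$ variable. The only subtlety worth spelling out is that the conditional law of the unobserved multiset given the ordered prefix $X_1^{t-1}$ depends on the prefix only through its sum, which is immediate from the exchangeability of uniform WoR sampling.

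\emph{Step 2 (casting into Proposition~\ref{prop:priorPosteriorMartingale}).} I would then take $\Theta$ to be $\{0,1,\dots,N\}$ (the support of the beta-binomial `prior'), and let $f_{n^+}$ be the joint pmf, with respect to counting measure on $\{0,1\}^t$, under which $X_1^t$ follows the WoR law of a population with $n^+$ successes; by Step~1 its one-step conditional $f_{n^+}(x\mid X_1^{t-1})$ is exactly the $\HG(N-(t-1),\, n^+ - \sum_{i=1}^{t-1}X_i,\, 1)$ pmf, these conditionals are consistent across $t$, and $X_1\sim f_{\HGP}$, $X_{t+1}\sim f_{\HGP}(\cdot\mid X_1^t)$ both hold. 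Since $a,b>0$, the beta-binomial prior $\pi_0$ puts positive mass on every point of $\Theta$. The one genuine computation is to confirm that the `working posterior' displayed just before the theorem is the actual Bayes update $\pi_t(n^+)\propto \pi_0(n^+)\,f_{n^+}(X_1^t)$, i.e.\ that $\BB$ is conjugate to $\HG$ up to the shift by $\sum_{i=1}^t X_i$; I would either cite the standard conjugacy fact \cite{fink1997compendium} or verify it by equating the two expressions directly. Given this, $R_t(n^+)=\pi_0(n^+)/\pi_t(n^+)$ is precisely the PPR of Proposition~\ref{prop:priorPosteriorMartingale}, so $(R_t(\HGP))_{t=0}^N$ is a nonnegative martingale and $C_t=\{n^+ : R_t(n^+)<1/\alpha\}$ is a $(1-\alpha)$-CS for $\HGP$.

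\emph{Step 3 (running intersection).} Finally, for any $t$ we have $\HGP\in\bigcap_{s\le t}C_s$ for \emph{every} $t$ if and only if $\HGP\in C_t$ for \emph{every} $t$; hence the event that the intersected sequence ever fails to cover $\HGP$ is identical to the corresponding event for $(C_t)_t$, so its probability is at most $\alpha$, and $(\bigcap_{s\le t}C_s)_t$ is moreover nonincreasing by construction. I expect the only real obstacle to be the conjugacy bookkeeping in Step~2; everything else is a direct translation of the WoR model into the language of Proposition~\ref{prop:priorPosteriorMartingale}.
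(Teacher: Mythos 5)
Your proposal is correct and follows the same route as the paper, which proves this theorem simply as a direct application of Proposition~\ref{prop:priorPosteriorMartingale} after establishing the hypergeometric conditional law and beta-binomial conjugacy (the paper itself leaves Steps 1--3 implicit, stating only that the result ``is a direct application'' of the PPR martingale). Your explicit verification of the conditional law, the conjugacy bookkeeping, and the running-intersection argument fills in exactly the details the paper omits, with no divergence in approach.
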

\begin{figure}[!htb]
	\centering \includegraphics[width=0.8\textwidth]{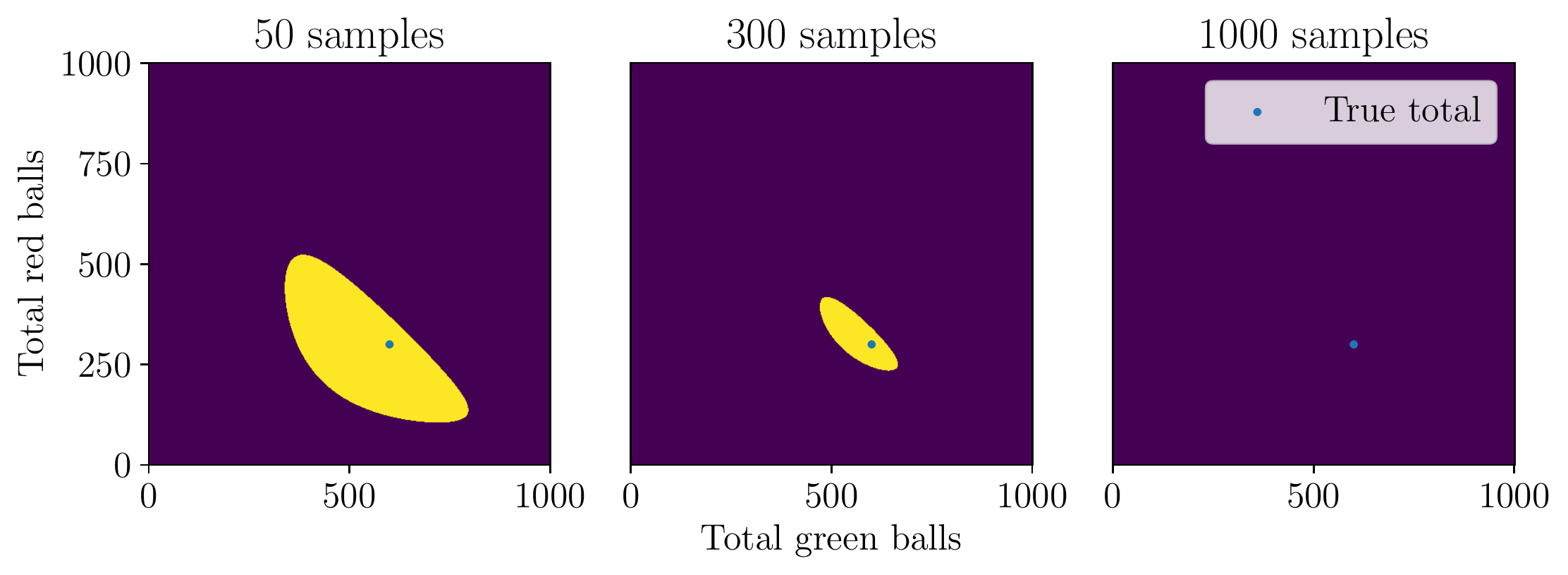}
	\caption{
    Consider sampling balls from an urn WoR with three distinct colors (red, green, and purple). In this example, the urn contains 1000 balls with 300 red, 600 green, and 100 purple. We only require a two-dimensional confidence sequence (yellow region) to capture uncertainty about all three totals. After around 300 balls have been sampled, we are quite confident that the urn is made up mostly of green; after 1000 samples, we know the totals for each color with certainty.
	}
	\label{fig:threePartyConfseq}
\end{figure}
The proof of Theorem \ref{theorem:hyperGeoConfseq} is a direct application of Proposition \ref{prop:priorPosteriorMartingale}. Note that for any `prior', the `posterior' at time $t = N$ is $\pi_N(n^+) = \mathds{1}(n^+ = N^+) $, so  $C_t$ shrinks to a point, containing only $N^+$. 
For $K > 2$ categories, Theorem~\ref{theorem:hyperGeoConfseq} can be extended to use a multivariate hypergeometric with a Dirichlet-multinomial prior to yield higher-dimensional CSs, but we leave the (notationally heavy) derivation to Appendix~\ref{section:priorPosteriorMultivariate}. See Figure~\ref{fig:threePartyConfseq} to get a sense of what these CSs can look like when $K=3$.

\subsection{Role of the `prior' in the prior-posterior CS}
\label{section:choiceOfPrior}

The prior-posterior CSs discussed thus far have valid (frequentist) coverage for any `prior' on $\HGP$, and in particular are valid for a beta-binomial `prior' with any data-independent choices of $a, b > 0$. 
Importantly, the corresponding CS always shrinks to zero width.
How, then, should the user pick $(a,b)$? 
Figure~\ref{fig:betaBinomial} provides some visual intuition.

\begin{figure}[!htb]
    \centering
        \includegraphics[width = 0.85 \textwidth]{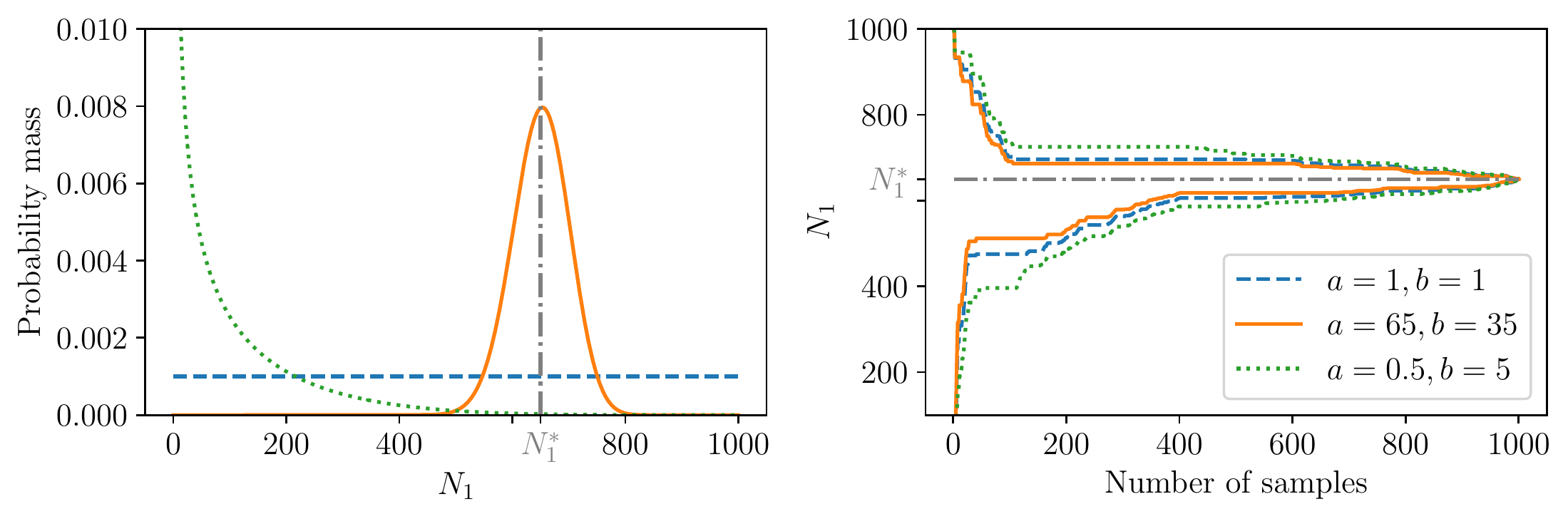}
  \caption{Beta-binomial probability mass function as a `prior' on $N_1^\star$ with different choices of ($a$, $b$), and the resulting PPR CS for the parameter $N_1^\star$ of a hypergeometric distribution when $(N_1^\star, N_2^\star) = (650, 350)$.}
  \label{fig:betaBinomial}
\end{figure}

These are our takeaway messages: (a) if the `prior' is very accurate (coincidentally peaked at the truth), the resulting CS is narrowest, (b) even if the `prior' is horribly inaccurate (placing almost no mass at the truth), the resulting CS is well-behaved and robust, albeit wider, (c) if we do not actually have any idea what the underlying truth might be, we suggest using a uniform `prior' to safely balance the two extremes. However, a more risky `prior' pays a relatively low statistical price.

\section{Bounded real-valued setting}
\label{section:boundedRealValued}

Suppose now that observations are real-valued and bounded as in Examples C and D of Appendix~\ref{section:fourMotivatingExamples}. Here we introduce Hoeffding- and empirical Bernstein-type inequalities for sampling WoR.

\subsection{Hoeffding-type bounds}

Recalling Section~\ref{section:observationModel}, we deal with a fixed batch $x_1^N$ of bounded real numbers $x_i \in [\ell,u]$ with mean $\mu := \frac{1}{N} \sum_{i=1}^N x_i$. 
Our CS for $\mu$ will utilize a novel WoR mean estimator,
\begin{equation}
\small
\widehat \mu_t := \frac{\sum_{i=1}^t X_i + \sum_{i=1}^{t}\frac{1}{N-i+1} \sum_{j=1}^{i-1} X_j}{t + \sum_{i=1}^{t}\frac{i-1}{N-i+1}}.
\label{eqn:withoutReplaceMeanEst}
\end{equation}
More generally, if $\lambda_1, \dots, \lambda_N$ is a predictable sequence (meaning $\lambda_t$ is $\mathcal F_{t-1}$-measurable for $t \in \{1, \dots, N\}$), then we may define the weighted WoR mean estimator,
\begin{equation}
\small
	\label{eqn:weightedWithoutReplaceMeanEst}
    \widehat \mu_t(\lambda_1^t) := \frac{\sum_{i=1}^t \lambda_i (X_i + \frac{1}{N-i+1} \sum_{j=1}^{i-1} X_j)}{\sum_{i=1}^t \lambda_i (1 + \frac{i-1}{N-i+1})},
\end{equation}
where it should be noted that if $\lambda_1 = \cdots = \lambda_N$ then $\widehat \mu_t(\lambda_1^t)$ recovers $\widehat \mu_t$.
Past WoR works \cite{serfling1974probability,bardenet2015concentration,greene2017exponential} base their bounds on the sample average $\sum_i X_i/t$.
Both $\widehat \mu_t$ and the sample average are conditionally biased and unconditionally unbiased (see Appendix~\ref{proof:hoeffdingWor} for more details).
As frequently encountered in Hoeffding-style inequalities for bounded random variables \cite{hoeffding_probability_1963}, define
\begin{equation}
\label{eqn:subGaussianCGF}
\small
\psi_{H}(\lambda) := \frac{\lambda^2 (u - \ell)^2}{8}.
\end{equation}
Setting $M^H_0 := 1$, we introduce a new exponential Hoeffding-type process for a predictable sequence $\lambda_1^N$,
				\begin{equation}
				\small
				\label{eqn:exponentialHoeffding}
				M_t^H := \exp \left \{ \sum_{i=1}^t \left [ \lambda_i \left ( X_i - \mu + \frac{1}{N-i+1}\sum_{j=1}^{i-1} (X_j - \mu) \right ) - \psi_H(\lambda_i)\right] \right \}.
				\end{equation}
\begin{theorem}[A time-uniform Hoeffding-type CS for sampling WoR]
\label{theorem:hoeffdingConfseq}
				Under the observation model and filtration $(\mathcal F_t)_{t=0}^N$ of Section~\ref{section:observationModel}, and for any predictable sequence $\lambda_1^N$, the process $(M_t^H)_{t=0}^N$ is a nonnegative supermartingale, and thus,

\[\Pr\left (\exists t \in [N] : \mu - \widehat \mu_t(\lambda_1^t) \geq \frac{\sum_{i=1}^t \psi_H(\lambda_i) + \log(1/\alpha)}{\sum_{i=1}^t \lambda_i \left (1 + \frac{i-1}{N-i+1}\right) } \right ) \leq \alpha. \]
Consequently,
\[ C^H_t := \widehat \mu_t(\lambda_1^t) \pm \frac{\sum_{i=1}^t \psi_{H}(\lambda_i) + \log(2/\alpha)}{\sum_{i=1}^t \lambda_i \left (1 + \frac{i-1}{N-i+1}\right )} 
~ \text{ ~ 	forms a $(1-\alpha)$-CS for $\mu$}.
\]

\end{theorem}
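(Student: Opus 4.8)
The plan is to establish that $(M_t^H)_{t=0}^N$ is a nonnegative supermartingale started at $1$, invoke Ville's maximal inequality, and then unwind the level-crossing event $\{M_t^H \ge 1/\alpha\}$ into the stated one-sided deviation bound on $\widehat\mu_t(\lambda_1^t)$ through a short algebraic identity.

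For the supermartingale step I would write $Z_i := X_i - \mu + \frac{1}{N-i+1}\sum_{j=1}^{i-1}(X_j-\mu)$, so that $M_{t+1}^H = M_t^H \exp\{\lambda_{t+1}Z_{t+1} - \psi_H(\lambda_{t+1})\}$ with $M_t^H$, $\lambda_{t+1}$ and $\sum_{j=1}^t(X_j-\mu)$ all $\mathcal F_t$-measurable; the claim then reduces to $\E(e^{\lambda_{t+1}Z_{t+1}} \mid \mathcal F_t) \le e^{\psi_H(\lambda_{t+1})}$. The first ingredient is the WoR conditional mean: under \eqref{eq:sampling-wo-replacement}, $\E(X_{t+1}\mid\mathcal F_t) = (N\mu - \sum_{j=1}^t X_j)/(N-t)$, hence $\E(X_{t+1}-\mu\mid\mathcal F_t) = -\frac{1}{N-t}\sum_{j=1}^t(X_j-\mu)$, so the correction term in $Z_{t+1}$ is exactly what is needed to force $\E(Z_{t+1}\mid\mathcal F_t)=0$. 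The second ingredient is that, conditional on $\mathcal F_t$, the undrawn entries $x_1^N\setminus X_1^t$ all lie in $[\ell,u]$, so $X_{t+1}$ --- and therefore $Z_{t+1}$, which is a fixed $\mathcal F_t$-measurable shift of $X_{t+1}$ --- ranges over an interval of width at most $u-\ell$, independently of $t$. Applying the conditional version of Hoeffding's lemma to the mean-zero, conditionally bounded $Z_{t+1}$ (treating the measurable $\lambda_{t+1}$ as a constant) gives $\E(e^{\lambda_{t+1}Z_{t+1}}\mid\mathcal F_t) \le e^{\lambda_{t+1}^2(u-\ell)^2/8} = e^{\psi_H(\lambda_{t+1})}$. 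Nonnegativity is immediate and $M_0^H=1$.

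To pass to the CS I would apply Ville's inequality to get $\Pr(\exists t\in[N]: M_t^H \ge 1/\alpha) \le \alpha$, and then rewrite this event. Writing $S_t := \sum_{i=1}^t\lambda_i(1+\frac{i-1}{N-i+1})$ for the denominator of \eqref{eqn:weightedWithoutReplaceMeanEst} and expanding the double sum (using $\sum_{j=1}^{i-1}\mu=(i-1)\mu$), one gets the identity $\sum_{i=1}^t\lambda_i Z_i = S_t(\widehat\mu_t(\lambda_1^t)-\mu)$, so $\log M_t^H = S_t(\widehat\mu_t(\lambda_1^t)-\mu) - \sum_{i=1}^t\psi_H(\lambda_i)$ and (for $S_t>0$) $\{M_t^H\ge 1/\alpha\}$ is exactly $\{\widehat\mu_t(\lambda_1^t)-\mu \ge (\sum_i\psi_H(\lambda_i)+\log(1/\alpha))/S_t\}$. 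Running the same argument on the companion process $\widetilde M_t := \exp\{\sum_{i=1}^t(-\lambda_i Z_i - \psi_H(\lambda_i))\}$ --- also a nonnegative supermartingale, since Hoeffding's lemma is insensitive to the sign of $\lambda_{t+1}$ and $\psi_H$ is even --- yields the other tail $\{\mu-\widehat\mu_t(\lambda_1^t)\ge(\sum_i\psi_H(\lambda_i)+\log(1/\alpha))/S_t\}$, which is the first displayed bound. A union bound over the two one-sided events, each at level $\alpha/2$ so that $\log(1/\alpha)$ becomes $\log(2/\alpha)$, then shows that $C_t^H$ is a $(1-\alpha)$-CS for $\mu$.

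The hard part is really just the Hoeffding-lemma step, and specifically the realization that boundedness must be used conditionally: nothing constrains $X_{t+1}-\mu$ a priori, but conditioning on $\mathcal F_t$ restricts $X_{t+1}$ to the remaining population values in $[\ell,u]$, so $Z_{t+1}$ has conditional range at most $u-\ell$ uniformly in $t$; together with the exact cancellation $\E(Z_{t+1}\mid\mathcal F_t)=0$ this delivers the \emph{same} sub-Gaussian constant $\psi_H(\lambda_{t+1})$ as the classical i.i.d.\ Hoeffding inequality, with no WoR penalty inside the exponent (the WoR gain instead shows up through the larger denominator $S_t$, via the extra $\frac{i-1}{N-i+1}$ terms). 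Everything else --- the conditional-mean computation for sampling WoR and verifying the identity $\sum_i\lambda_i Z_i = S_t(\widehat\mu_t(\lambda_1^t)-\mu)$ --- is routine bookkeeping.
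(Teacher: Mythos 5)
Your proof is correct and follows essentially the same route as the paper's: establish that $(M_t^H)$ is a nonnegative supermartingale by combining the exact conditional-mean-zero property of the WoR-corrected increment with a conditional Hoeffding's lemma, apply Ville's inequality, invert the level-crossing event via the identity relating $\sum_i \lambda_i Z_i$ to $\widehat\mu_t(\lambda_1^t)-\mu$, and union-bound the two tails (your negated companion process is just the paper's ``apply the argument to $-X_i$, $-\mu$'' step). No gaps worth noting.
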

The proof in Appendix \ref{proof:hoeffdingWor} combines ideas from the with-replacement, \emph{time-uniform} extension of Hoeffding's inequality of Howard et al.~\cite{howard2018uniform,howard_exponential_2018} with the fixed-time, \textit{without-replacement} extension of Hoeffding's by Bardenet \& Maillard~\cite{bardenet2015concentration}, to yield a bound that improves on both. When $\lambda := \lambda_1 = \cdots = \lambda_N$ is a constant, the term 
\begin{equation}
\small    
A_t := \sum_{i=1}^{t}\frac{i-1}{N-i+1}
\end{equation}
 captures the `advantage' over the classical Hoeffding's inequality; we discuss this term more soon. 

In order to use the aforementioned CS, one needs to choose a predictable $\lambda$-sequence. First, consider the simpler case of a fixed real-valued $\lambda := \lambda_1 = \cdots \lambda_N$ as this will aid our intuition in choosing a more complex $\lambda$-sequence. In this case, $\lambda$ corresponds to a time $t_0 \in [N]$ for which the CS is tightest. If the user wishes to optimize the width of the CS for time $t_0$, then the corresponding $\lambda$ to be used is given by
\begin{equation}
	\label{eqn:lambdaOpt}
	\small
	\lambda := \sqrt{\frac{8 \log(2/\alpha)}{t_0 (u - \ell)^2}}.
\end{equation}
Alternatively, if the user does not wish to commit to a single time $t_0$, they can choose a $\lambda$-sequence akin to \eqref{eqn:lambdaOpt} but which spreads its width optimization over time. For example, one can use the sequence for $t \in \{1, \dots, N\}$,
\begin{equation}
\small
 \lambda_t := \sqrt{\frac{8 \log(2/\alpha)}{t \log(t+1) (u-\ell)^2}} \land \frac{1}{u-\ell},
\label{eq:lambdaSeqSpread}
 \end{equation}
where the minimum was taken to prevent the CS width from being dominated by early terms. Note however that any predictable $\lambda$-sequence yields a valid CS (see Appendix~\ref{section:choiceOfLambdaSequence} for more examples).

Optimizing a real-valued $\lambda = \lambda_1 = \cdots = \lambda_N$ for a particular time is in fact the typical strategy used to obtain the tightest fixed-time (i.e. non-sequential) Chernoff-based confidence intervals (CIs) such as those based on Hoeffding's inequality \cite{howard2018uniform, hoeffding_probability_1963}. This same strategy can be used with our WoR CSs to obtain tight fixed-time CIs for sampling WoR. Specifically, plugging \eqref{eqn:lambdaOpt} into Theorem~\ref{theorem:hoeffdingConfseq} for a fixed sample size $n \in [N]$, we obtain the following corollary.
\begin{corollary}[Hoeffding-type CI for sampling WoR]
\label{corollary:hoeffdingCI}
For any $n \in [N]$,
\begin{equation} \label{eq:HoeffdingCI}\widehat \mu_n \pm \frac{\sqrt{\frac{1}{2}(u-\ell)^2 \log(2/\alpha)}}{\sqrt{n} + A_n/\sqrt{n}} ~~\text{forms a $(1-\alpha)$ CI for $\mu$.}
\end{equation}
\end{corollary}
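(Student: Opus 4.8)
The plan is to derive this as an immediate specialization of Theorem~\ref{theorem:hoeffdingConfseq}: a time-uniform CS is in particular a valid fixed-time confidence set at any single deterministic time, so since $\Pr(\forall t \in [N],\ \mu \in C^H_t) \geq 1-\alpha$ we certainly have $\Pr(\mu \in C^H_n) \geq 1-\alpha$ for the fixed $n \in [N]$ of interest. It therefore suffices to evaluate $C^H_n$ for the constant predictable sequence $\lambda_1 = \cdots = \lambda_N = \lambda$ with $\lambda$ given by \eqref{eqn:lambdaOpt} taking $t_0 = n$, and then simplify.

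First I would record the two simplifications that occur when $\lambda$ is constant. The center $\widehat\mu_n(\lambda_1^n)$ reduces to $\widehat\mu_n$, as noted just below \eqref{eqn:weightedWithoutReplaceMeanEst}. For the half-width, the numerator $\sum_{i=1}^n \psi_H(\lambda_i) + \log(2/\alpha)$ equals $n\lambda^2(u-\ell)^2/8 + \log(2/\alpha)$ because $\psi_H(\lambda) = \lambda^2(u-\ell)^2/8$ is independent of $i$, and the denominator $\sum_{i=1}^n \lambda_i\big(1 + \tfrac{i-1}{N-i+1}\big)$ factors as $\lambda\big(n + \sum_{i=1}^n \tfrac{i-1}{N-i+1}\big) = \lambda(n + A_n)$.

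Then I would substitute $\lambda = \sqrt{8\log(2/\alpha)/(n(u-\ell)^2)}$. The one step with any content is the cancellation this choice produces: $n\lambda^2(u-\ell)^2/8 = \log(2/\alpha)$, so the numerator collapses to $2\log(2/\alpha)$, while the denominator becomes $\sqrt{8\log(2/\alpha)/(n(u-\ell)^2)}\,(n + A_n)$. Dividing numerator and denominator through by $\sqrt n$ and collecting the constant factor $2/\sqrt 8 = 1/\sqrt 2$ into the square root turns the half-width into $\sqrt{\tfrac12(u-\ell)^2\log(2/\alpha)}\big/\big(\sqrt n + A_n/\sqrt n\big)$, which is exactly the expression in \eqref{eq:HoeffdingCI}.

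There is essentially no obstacle: the statement is a direct corollary, and the only ``work'' is the algebraic cancellation enabled by the particular $\lambda$ of \eqref{eqn:lambdaOpt}. The point worth emphasizing in the write-up is that this specific $\lambda$ is not required for \emph{validity} --- any constant (indeed any predictable) $\lambda$-sequence yields a valid CI by the identical argument --- but it is the minimizer over constant $\lambda$ of the half-width $\big(n\lambda^2(u-\ell)^2/8 + \log(2/\alpha)\big)/\big(\lambda(n+A_n)\big)$ at horizon $n$, which is why it produces the tightest such CI; checking this optimality is a one-line differentiation in $\lambda$ of the form $c_1\lambda + c_2/\lambda$.
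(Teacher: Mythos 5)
Your proposal is correct and is exactly the paper's argument: the paper obtains the corollary by plugging the constant $\lambda$ of \eqref{eqn:lambdaOpt} with $t_0=n$ into Theorem~\ref{theorem:hoeffdingConfseq} and reading off the CS at the single time $n$, which is precisely your specialization-plus-algebra (including the reduction of $\widehat\mu_n(\lambda_1^n)$ to $\widehat\mu_n$ and the cancellation $n\lambda^2(u-\ell)^2/8=\log(2/\alpha)$). Your added remark on the optimality of this $\lambda$ over constant choices is consistent with the paper's discussion but not needed for validity.
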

Notice that the classical Hoeffding confidence interval is recovered exactly, including constants, by dropping the $A_n$ term and using the usual sample mean estimator instead of $\widehat \mu_t$. To get a sense of how large the advantage is, note that 
\begin{align*}
\text{for small $n \ll N$, ~ } A_n &\asymp \sum_{i=1}^{n-1} i/N \asymp n^2/N,\\
\text{for large $n \approx N$, ~ } A_n &\asymp A_N = \sum_{i=1}^{N-1} \frac{i}{N-i} =  \sum_{j=1}^{N-1} \frac{N-j}{j} \asymp N\log N - (N-1).
\end{align*}
Thus, the advantage is negligible for $n = O(\sqrt{N})$, while it is substantial for $n=O(N)$, but it is clear that the CI of \eqref{eq:HoeffdingCI} is strictly tighter than Hoeffding's inequality for any $n$.

\subsection{Empirical Bernstein-type bounds}
\label{section:empiricalBernstein}
Hoeffding-type bounds like the one in Theorem~\ref{theorem:hoeffdingConfseq} only make use of the fact that observations are bounded, and they can be loose if only some observations are near the boundary of $[\ell,u]$ while the rest are concentrated near the middle of the interval. More formally, the CS of Theorem~\ref{theorem:hoeffdingConfseq} has the same width whether the underlying population $x_1^N$ has large or small variance $\sum_{i=1}^N (x_i - \mu)^2$---thus, they are tightest when the $x_i$s equal $\ell$ or $u$, and they are loosest when $x_i \approx (\ell+u)/2$ for all $i$. As an alternative that adaptively takes a variance-like term into account \cite{maurer_empirical_2009,balsubramani_sequential_2016}, we introduce a sequential, WoR, empirical Bernstein CS. 
As is typical in empirical Bernstein bounds \cite{howard2018uniform}, we use a different `subexponential'-type function,
\[ 
\small
\psi_{E}(\lambda) := (-\log (1-c\lambda) - c\lambda)/4 ~ \text{ ~ for any $\lambda \in [0, 1/c)$}
\] 
where $c := u- \ell$. $\psi_E$ seems quite different from $\psi_H$, but Taylor expanding $\log$ yields $\psi_E(\lambda)\approx c^2\lambda^2/8$. Indeed,
\begin{equation}\label{eq:psi-limit}
\small
\lim_{\lambda\to 0} \psi_{E}(\lambda) / \psi_H(\lambda) = 1.
\end{equation}
Note that one typically picks small $\lambda$, e.g.: set $t_0 = N/2, \ell=-1,u=1$ in \eqref{eqn:lambdaOpt} to get $\lambda_1 \propto 1/\sqrt{N}$. 

In what follows, we derive a time-uniform empirical-Bernstein inequality for sampling WoR. Similar to Theorem~\ref{theorem:hoeffdingConfseq}, underlying the bound is an exponential supermartingale. 
Set $M^E_0=1$, and recall that $c=u-\ell$ to define a novel exponential process for any $[0, 1/c)$-valued predictable sequence $\lambda_1, \dots \lambda_N$:
\begin{equation}
\label{eqn:exponentialEmpiricalBernstein}
\small
	M_t^E := \exp \left \{ \sum_{i=1}^t \left [ \lambda_i \left ( X_i - \mu + \frac{1}{N-i+1}\sum_{j=1}^{i-1} (X_j - \mu) \right ) - \left ( \frac{c}{2}\right )^{-2}(X_i - \widehat \mu_{i-1})^2\psi_E(\lambda_i)\right] \right \}.
\end{equation}

\begin{theorem}[A time-uniform empirical Bernstein-type CS for sampling WoR]
\label{theorem:empiricalBernsteinConfseq}
Under the observation model and filtration $(\mathcal F_t)_{t=0}^N$ of Section~\ref{section:observationModel}, and for any $[0, 1/c)$-valued predictable sequence $\lambda_1^N$, the process $(M_t^E)_{t=0}^N$ is a nonnegative supermartingale, and thus,
\[\Pr\left (\exists t \in [N] : \mu - \widehat \mu_t(\lambda_1^t) \geq \frac{\sum_{i=1}^t \left ( c/2 \right )^{-2}(X_i - \widehat \mu_{i-1})^2\psi_E(\lambda_i) + \log(1/\alpha)}{\sum_{i=1}^t \lambda_i \left (1 + \frac{i-1}{N-i+1}\right) } \right ) \leq \alpha. \]
Consequently,
\[ C^E_t := \widehat \mu_t(\lambda_1^t) \pm \frac{\sum_{i=1}^t \left ( c/2\right )^{-2}(X_i - \widehat \mu_{i-1})^2\psi_{E}(\lambda_i) + \log(2/\alpha)}{\sum_{i=1}^t \lambda_i \left (1 + \frac{i-1}{N-i+1}\right )} 
~ \text{ ~ 	forms a $(1-\alpha)$-CS for $\mu$}.
\]

\end{theorem}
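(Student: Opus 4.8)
The plan is to follow the same skeleton as Theorem~\ref{theorem:hoeffdingConfseq}: prove $(M_t^E)_{t=0}^N$ is a nonnegative supermartingale, apply Ville's inequality for nonnegative supermartingales, and rearrange. Since $M_t^E$ is a positive, $\mathcal F_t$-measurable function of $X_1^t$ with $M_0^E=1$, the supermartingale property reduces to the single-step inequality $\E\big[\exp\{\lambda_{t+1}Z_{t+1}-(c/2)^{-2}(X_{t+1}-\widehat\mu_t)^2\psi_E(\lambda_{t+1})\}\mid\mathcal F_t\big]\le 1$, where I write $Z_{t+1}:=X_{t+1}-\mu+\tfrac{1}{N-t}\sum_{j=1}^t(X_j-\mu)$. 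From the WoR model \eqref{eq:sampling-wo-replacement}, $\E[X_{t+1}\mid\mathcal F_t]=\tfrac{1}{N-t}\big(N\mu-\sum_{j\le t}X_j\big)$, so $\E[Z_{t+1}\mid\mathcal F_t]=0$; this martingale-difference identity is precisely why $M_t^E$ (like $M_t^H$) centers its summands this way. The new difficulty, absent in the Hoeffding case, is that the quadratic penalty is centered at the \emph{data-dependent} $\widehat\mu_t$ rather than at $\E[X_{t+1}\mid\mathcal F_t]$, so it does not factor out of the conditional expectation — overcoming this is the heart of the proof.

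I would handle it by splitting $Z_{t+1}=(X_{t+1}-\widehat\mu_t)+B_t$ with $B_t:=\widehat\mu_t-\mu+\tfrac{1}{N-t}\sum_{j\le t}(X_j-\mu)$, which is $\mathcal F_t$-measurable, so the exponent becomes $\lambda_{t+1}B_t+\big[\lambda_{t+1}(X_{t+1}-\widehat\mu_t)-(c/2)^{-2}(X_{t+1}-\widehat\mu_t)^2\psi_E(\lambda_{t+1})\big]$. The key deterministic lemma is then: for every $\lambda\in[0,1/c)$ and every real $z$ with $|z|\le c$,
\[ \exp\!\big\{\lambda z-(c/2)^{-2}z^2\psi_E(\lambda)\big\}\ \le\ 1+\lambda z. \]
Because $|\lambda z|<\lambda c<1$, the right-hand side is positive, and taking logs makes the lemma equivalent to $g(\lambda z)\le\tfrac{z^2}{c^2}\,g(-c\lambda)$, where $g(x):=x-\log(1+x)\ge 0$ and I used $-\log(1-c\lambda)-c\lambda=g(-c\lambda)$; since $g(w)\le g(-w)$ for $w\ge 0$ we get $g(\lambda z)\le g(-\lambda|z|)$, and since $s\mapsto g(-\lambda s)/s^2=\sum_{k\ge 2}\lambda^k s^{k-2}/k$ is nondecreasing on $[0,c]$ and $|z|\le c$, also $g(-\lambda|z|)\le\tfrac{z^2}{c^2}g(-\lambda c)$. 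To apply the lemma with $z=X_{t+1}-\widehat\mu_t$ I need $|X_{t+1}-\widehat\mu_t|\le c$, i.e.\ $\widehat\mu_t\in[\ell,u]$; this holds because \eqref{eqn:weightedWithoutReplaceMeanEst} exhibits $\widehat\mu_t(\lambda_1^t)$ as a convex combination of $x_1,\dots,x_N$ (its $i$-th numerator term is $\lambda_i p_i z_i$ with $p_i:=1+\tfrac{i-1}{N-i+1}=\tfrac{N}{N-i+1}$ and $z_i:=\tfrac1N\big((N-i+1)X_i+\sum_{j<i}X_j\big)\in[\ell,u]$, and its denominator is $\sum_i\lambda_i p_i$), with $\widehat\mu_0\in[\ell,u]$ fixed by convention.

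Granting the lemma, conditioning on $\mathcal F_t$ and using $\E[X_{t+1}-\widehat\mu_t\mid\mathcal F_t]=-B_t$ (immediate from $\E[Z_{t+1}\mid\mathcal F_t]=0$) together with $1+x\le e^x$,
\[ \E\big[e^{\lambda_{t+1}B_t}\big(1+\lambda_{t+1}(X_{t+1}-\widehat\mu_t)\big)\mid\mathcal F_t\big]=e^{\lambda_{t+1}B_t}(1-\lambda_{t+1}B_t)\le e^{\lambda_{t+1}B_t}e^{-\lambda_{t+1}B_t}=1, \]
so $(M_t^E)$ is a nonnegative supermartingale. Ville's inequality gives $\Pr(\exists t:M_t^E\ge 1/\alpha)\le\alpha$, and rearranging $M_t^E\ge1/\alpha$ via the identity $\sum_{i\le t}\lambda_iZ_i=\big(\sum_{i\le t}\lambda_i p_i\big)\big(\widehat\mu_t(\lambda_1^t)-\mu\big)$ (again from \eqref{eqn:weightedWithoutReplaceMeanEst}) yields the displayed one-sided time-uniform bound; running the identical construction on the batch reflected about $(\ell+u)/2$ — which negates every $Z_i$, sends $\widehat\mu_t\mapsto\ell+u-\widehat\mu_t$, and leaves $(X_i-\widehat\mu_{i-1})^2$ and $\psi_E(\lambda_i)$ unchanged — gives the complementary tail, and a union bound at level $\alpha/2$ over the two directions produces the two-sided $(1-\alpha)$-CS $C_t^E$. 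The main obstacle is the deterministic lemma together with the re-centering idea behind it: the empirical variance proxy must sit around $\widehat\mu_t$ so that it pairs exactly with the ``empirical residual'' $X_{t+1}-\widehat\mu_t$ created by the split, leaving only the benign predictable factor $e^{\lambda_{t+1}B_t}$ for $1+x\le e^x$ to absorb; the rest is shared bookkeeping with Theorem~\ref{theorem:hoeffdingConfseq}.
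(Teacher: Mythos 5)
Your proposal is correct and follows essentially the same route as the paper's proof: the same decomposition $X_{t+1}-\mu+\tfrac{1}{N-t}\sum_{j\le t}(X_j-\mu)=(X_{t+1}-\widehat\mu_t)+B_t$ (your $B_t$ is the paper's $\delta_t$), the same key inequality $\exp\{\lambda\xi-(c/2)^{-2}\xi^2\psi_E(\lambda)\}\le 1+\lambda\xi$ applied at $\xi=X_{t+1}-\widehat\mu_t$, followed by $1-x\le e^{-x}$, Ville's inequality, and a union bound over the two directions. The only differences are cosmetic improvements in rigor: the paper cites Fan et al.\ for that inequality and rescales to $[0,1]$, whereas you prove it directly at general $c$ and explicitly verify $\widehat\mu_t(\lambda_1^t)\in[\ell,u]$ (needed for $|\xi|\le c$), a point the paper leaves implicit.
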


The proof in Appendix \ref{proof:empiricalBernsteinWor} involves modifying the proof of Theorem 4 in Howard et al.~\cite{howard2018uniform} to use our WoR versions of $\widehat \mu_t$ and to include predictable values of $\lambda_t$.

\begin{figure}[!htb]
    \centering
    \includegraphics[width=0.9\textwidth]{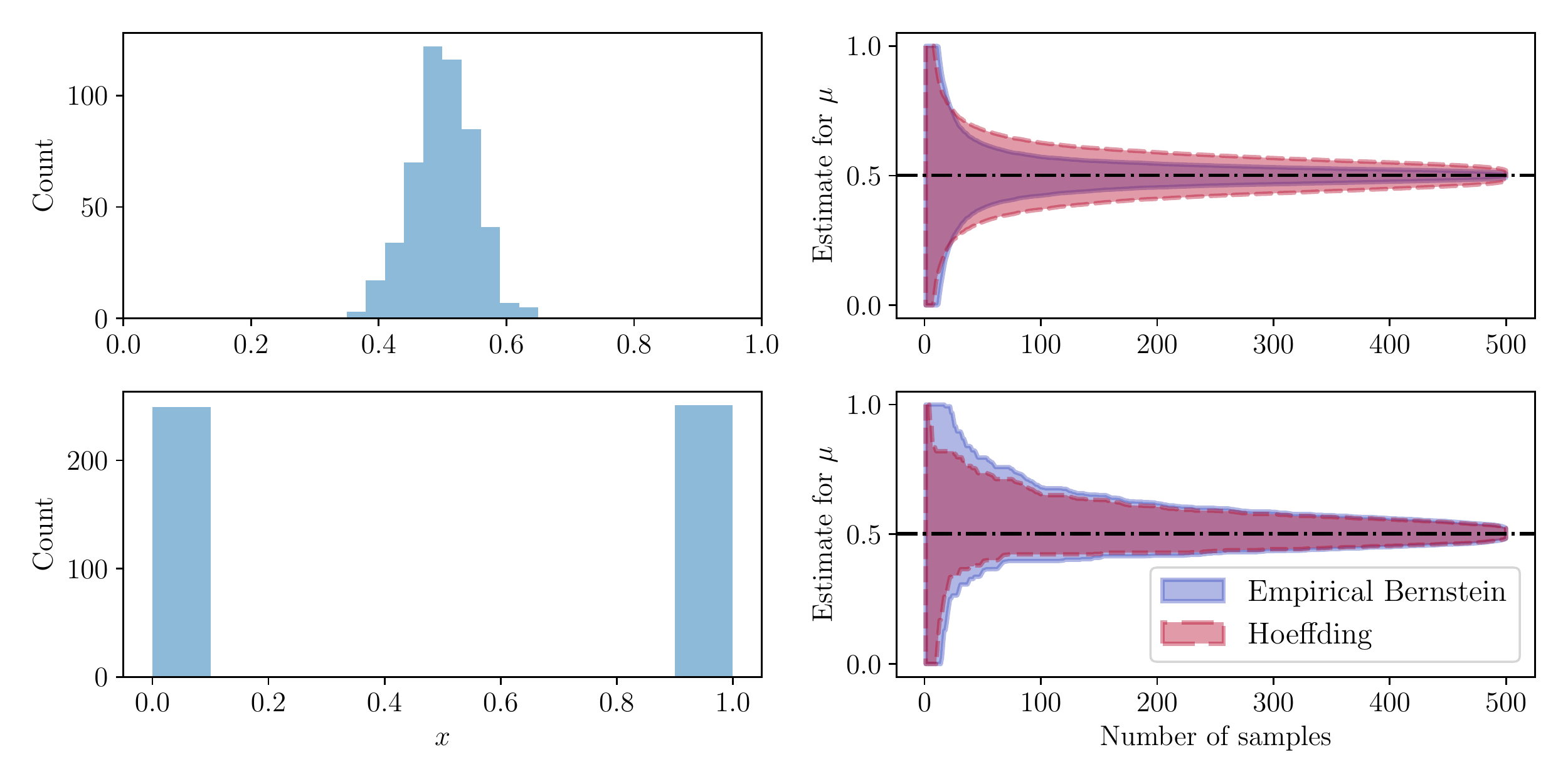}
    \caption{
    Left-most plots show the histogram of the underlying set of numbers $x_1^N\in [0,1]^N$, while right-most plots compare empirical Bernstein- and Hoeffding-type CSs for $\mu$. Specifically, the Hoeffding and empirical Bernstein CSs use the $\lambda$-sequences in \eqref{eq:lambdaSeqSpread} and \eqref{eq:empBernLambdaSeq}, respectively. As expected, in low-variance settings (top), $C^E_t$ is superior, but in a high-variance setting (bottom), $C^H_t$ has a slight edge.}
    \label{fig:hoeffdingBernsteinVariance}
\end{figure}

As before one must choose a $\lambda$-sequence to use $C^E_t$. We will again consider the case of a real-valued $\lambda := \lambda_1 = \cdots \lambda_N$ to help guide our intuition on choosing a more complex $\lambda$-sequence. Unlike earlier, we cannot optimize the width of $C_t^E$ in closed-form since $\psi_E$ is less analytically tractable. 
Once more, fact~\eqref{eq:psi-limit} comes to our rescue: substituting $\psi_H$ for $\psi_E$ and optimizing the width yields an expression like~\eqref{eqn:lambdaOpt}:
\begin{equation}
	\label{eqn:lambdaOpt2}
	\small
	\lambda^\star := \sqrt{\frac{2 \log(2/\alpha)}{\widehat V_t}},
\end{equation}
where $\widehat V_t := \sum_{i=1}^t (X_i - \widehat \mu_{i-1})^2$ is a variance process. 
However, we cannot use this choice of $\lambda^\star$ since it depends on $X_1^t$. Instead, we construct a predictable $\lambda$-sequence which mimics $\lambda^\star$ and adapts to the underlying variance as samples are collected. To heuristically optimize the CS for a particular time $t_0$, take an estimate $\widehat \sigma_{t-1}^2$ of the variance which only depends on $X_1^{t-1}$, and set
\begin{equation}
\label{eq:empBernLambdaSeq_t_0}
\small
 \lambda_t := \sqrt{\frac{2\log(2/\alpha)}{\widehat \sigma^2_{t-1} t_0}} \land \frac{1}{2c}.
\end{equation}
Alternatively, to spread the CS width optimization over time as in \eqref{eq:lambdaSeqSpread}, one can use the $\lambda$-sequence,
\begin{equation} 
\label{eq:empBernLambdaSeq}
\small
\lambda_t := \sqrt{\frac{2 \log(2/\alpha)}{\widehat \sigma_{t-1}^2t  \log(t+1)}} \land \frac{1}{2c},
\end{equation}
but again, any predictable sequence will suffice.

Similarly to the Hoeffding-type CS, we may instantiate the empirical Bernstein-type CS at a particular time to obtain tight CIs for sampling WoR. However, ensuring that the resulting fixed-time CI is valid when using a data-dependent $\lambda$-sequence requires some additional care. Suppose now that $X_1^n$ is a simple random sample WoR from the finite population, $x_1^N \in [\ell, u]^N$. If we randomly permute $X_1, \dots, X_n$ to obtain the sequence, $\widetilde X_1, \dots, \widetilde X_n$, we have recovered the observation model of Section~\ref{section:observationModel}, and thus Theorem~\ref{theorem:empiricalBernsteinConfseq} applies. We choose a $\lambda$-sequence which sequentially estimates the variance, but heuristically optimizes for the sample size $n$ as in \eqref{eq:empBernLambdaSeq_t_0}. For $t \in [n]$, define
\begin{equation}
\small
    \label{eq:lambdaEmpBernCI}
    \widetilde \lambda_t := \sqrt{\frac{2 \log(2/\alpha)}{ n \widetilde\sigma_{t-1}^2}} \land \frac{1}{2c} ~~~\text{where}~~~ \widetilde \sigma_t^2 := \frac{c^2/4 + \sum_{i=1}^t(\widetilde X_i - \widetilde \mu_i)^2}{t + 1} ~~~\text{and}~~~ \widetilde \mu_t := \frac{1}{t}\sum_{i=1}^t \widetilde X_i.
\end{equation} 
Here, an extra $c^2/4$ was added to $\widetilde \sigma^2_t$ so that it is defined at time 0, but this is simply a heuristic and any other choice of $\widetilde \sigma^2_0$ will suffice. The resulting CI can be summarized in the following corollary.
\begin{corollary}
\label{corollary:empBernCI}
    Let $X_1^n$ be a simple random sample WoR from the finite population $x_1^N$ and let $\widetilde X_1^n$ be a random permutation of $X_1^n$. Let $\widetilde \lambda_t$ be a predictable sequence such as the one in \eqref{eq:lambdaEmpBernCI} for each $t \in [n]$. Then for any $n \in [N]$,
    \[ \widehat \mu_n(\widetilde \lambda_1^n) \pm \frac{\sum_{i=1}^n (c/2)^{-2} (\widetilde X_i - \widetilde \mu_{i-1})^2 \psi_E(\widetilde \lambda_i) + \log(2/\alpha)}{\sum_{i=1}^n \widetilde \lambda_i\left ( 1 + \frac{i-1}{N-i+1} \right )} \text{ forms a $(1-\alpha)$ CI for $\mu$}.\]
\end{corollary}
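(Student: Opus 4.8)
The plan is to reduce this fixed-sample-size problem to the sequential observation model~\eqref{eq:sampling-wo-replacement} of Section~\ref{section:observationModel} and then read off the claimed interval as a single time slice of the confidence sequence from Theorem~\ref{theorem:empiricalBernsteinConfseq}. The two ingredients that make this go through are already flagged in the text preceding the statement: randomly permuting a simple random sample recovers the WoR observation model, and Theorem~\ref{theorem:empiricalBernsteinConfseq} permits an arbitrary $[0,1/c)$-valued predictable $\lambda$-sequence.

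First I would pin down the distributional reduction. A simple random sample WoR of size $n$ from $x_1^N$ is a uniformly random $n$-element subset of the population; applying an independent uniformly random permutation makes $(\widetilde X_1,\dots,\widetilde X_n)$ uniform over all ordered $n$-tuples of distinct population elements, which is exactly the joint law of the first $n$ coordinates under~\eqref{eq:sampling-wo-replacement}. To match the horizon assumed by Theorem~\ref{theorem:empiricalBernsteinConfseq}, I would then conceptually extend $\widetilde X_1^n$ to a full uniform random permutation $\widetilde X_1^N$ of $x_1^N$ (continue drawing the remaining $N-n$ elements uniformly WoR); the process $(\widetilde X_t)_{t=1}^N$ with $\mathcal F_t := \sigma(\widetilde X_1,\dots,\widetilde X_t)$ then obeys the observation model of Section~\ref{section:observationModel}. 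Next I would verify that $\widetilde\lambda_1^N$ --- defined by~\eqref{eq:lambdaEmpBernCI} for $t\le n$ and extended predictably (say, held constant) for $t>n$ --- is an admissible input: $\widetilde\sigma_{t-1}^2$ depends only on $\widetilde X_1^{t-1}$, so $\widetilde\lambda_t$ is $\mathcal F_{t-1}$-measurable, and the truncation at $1/(2c)$ together with the strictly positive offset $c^2/4$ inside $\widetilde\sigma_t^2$ (which also precludes dividing by zero) forces $\widetilde\lambda_t\in(0,1/(2c)]\subset[0,1/c)$. Hence $\widetilde\lambda_1^N$ is a $[0,1/c)$-valued predictable sequence.

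With these two checks done, Theorem~\ref{theorem:empiricalBernsteinConfseq} applies and produces a $(1-\alpha)$-CS $(C_t^E)_{t\in[N]}$ for $\mu$; specializing the time-uniform guarantee to $t=n$ gives $\Pr(\mu\in C_n^E)\ge 1-\alpha$. Unwinding $C_n^E$ with the sequence $\widetilde\lambda_1^n$ shows it coincides with the interval in the statement, and because that interval depends on $\widetilde X_1^n$ alone, the conceptual extension to a full permutation is harmless. The only genuine (and mild) obstacle is making this reduction airtight --- checking that permuting an unordered WoR sample really reproduces~\eqref{eq:sampling-wo-replacement}, and that an $[N]$-horizon theorem may be invoked when only $n\le N$ draws are used --- both of which are handled by the extension device above; after that the result is a direct specialization of Theorem~\ref{theorem:empiricalBernsteinConfseq}, exactly as Corollary~\ref{corollary:hoeffdingCI} is of Theorem~\ref{theorem:hoeffdingConfseq}.
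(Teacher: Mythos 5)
Your proposal is correct and follows the same route as the paper: the paper's (implicit) proof is exactly the reduction stated just before the corollary --- randomly permuting the simple random sample recovers the observation model of Section~\ref{section:observationModel}, so Theorem~\ref{theorem:empiricalBernsteinConfseq} applies with the predictable sequence $\widetilde\lambda_1^n$, and the CI is the time-$n$ slice of the resulting CS. Your extra checks (that $\widetilde\lambda_t$ is $\mathcal F_{t-1}$-measurable and $[0,1/c)$-valued, and the predictable extension of the sequence beyond time $n$ to invoke the $[N]$-horizon theorem) merely make explicit details the paper leaves tacit.
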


The aforementioned CSs and CIs have a strong relationship with corresponding hypothesis tests. In the following section, we discuss how one can use the techniques developed here to sequentially test hypotheses about finite sets of nonrandom numbers.

\section{Testing hypotheses about finite sets of nonrandom numbers}
\label{sec:testing}

In classical hypothesis testing, one has access to i.i.d. data from some underlying distribution(s), and one wishes to test some property about them; this includes sequential tests dating back to Wald~\cite{wald1945sequential}. However, it is not often appreciated that it is possible to test hypotheses about a finite list of numbers that do not have any distribution attached to them. 
Recalling the setup of Section~\ref{section:observationModel}, this is the nonstandard setting we find ourselves in.
For instance in the same example as Figure~\ref{fig:CS}, we may wish to test:
\[
H_0: N_1^\star \leq 550 \quad \text{(At most 550 of the balls are green).}
\]
If we had access to each ball in advance, then we could accept or reject the null without any type-I or type-II error, but this is tedious, and so we sequentially take samples in a random order to test this hypothesis. The main question then is: \textit{how do we calculate a $p$-value $P_t$ that we can track over time, and stop sampling when $P_t \leq 0.05$?}

\begin{figure}[!htb]
    \centering
    \includegraphics[width=0.9\textwidth]{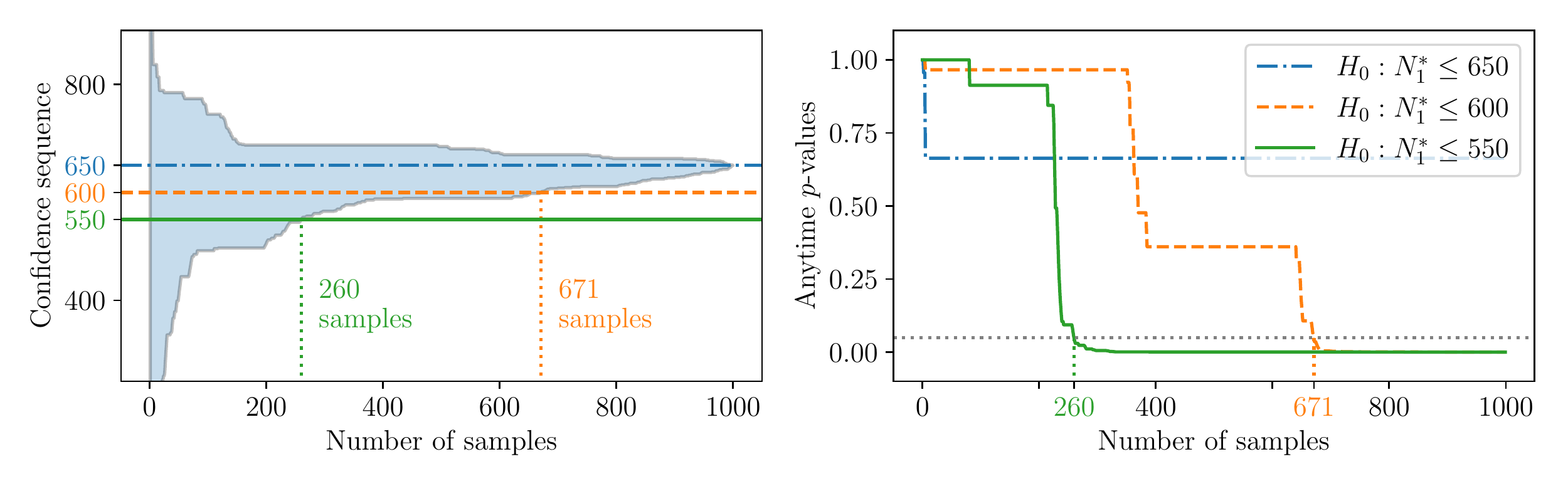}
    \caption{The duality between anytime $p$-values and CSs for three null hypotheses: $H_0: N_1^\star \leq D$ for $D \in \{550, 600, 650\}$. The first null is rejected at a $5\%$ significance level after 260 samples, exactly when the $95\%$ CS stops intersecting the null set $[0, 550]$. However, $H_0:N_1^\star \leq 650$ is never rejected since 650, the ground truth, is contained in the CS at all times from 0 to 1000.}
    \label{fig:CSversusPvalue}
\end{figure}

Luckily, we do not need any new tools for this, and our CSs provide a straightforward answer. Though we left it implicit, each confidence sequence $C_t$ is really a function of confidence level $\alpha$. Consider the family $\{C_t(q)\}_{q \in (0,1)}$ indexed by $q$, which we only instantiated at $q=\alpha$. Now, define
\begin{equation}\label{eq::anytime-p-value}
P_t := \inf\{ q: C_t(q) \cap H_0 = \emptyset \},
\end{equation}
which is the smallest error level $q$ at which $C_t(q)$ just excludes the null set $H_0$. This `duality' is familiar in non-sequential settings, and in our case it yields an anytime-valid $p$-value~\cite{johari2015always,howard2018uniform}, 
\[
\text{Under $H_0$, } \quad  \Pr(\exists t\in[N]: P_t \leq \alpha) \leq \alpha \text{ for any } \alpha \in [0,1].
\]
In words, if the null hypothesis is true, then $P_t$ will remain above $\alpha$ through the whole process, with probability $\geq 1-\alpha$. To more clearly bring out the duality to CSs, define the stopping time
\[
\tau := \inf\{t \in [N]: P_t \leq \alpha \}, \text{ and we set } \tau=N \text{ if the $\inf$ is not achieved}.
\]
Then under the null, $\tau=N$ (we never stop early) with probability $\geq 1-\alpha$. If we do stop early, then $\tau$ is exactly the time at which $C_t(\alpha)$ excluded the null set $H_0$. The manner in which anytime-valid $p$-values and CSs are connected through stopping times is demonstrated in Figure~\ref{fig:CSversusPvalue}.

In summary, our CSs directly yield $p$-values \eqref{eq::anytime-p-value} for composite null hypotheses. These $p$-values can be tracked, and are valid simultaneously at all times, including at arbitrary stopping times. Aforementioned type-I error probabilities are due to the randomness in the ordering, not in the data.

It is worth noting that our (super)martingales $(R_t)$, $(M^H_t)$ and $(M^E_t)$ also immediately yield `e-values'~\cite{shafer2019game} and hence `safe tests'~\cite{grunwald2019safe}, meaning that under nulls of the form in Figure~\ref{fig:CSversusPvalue}, they satisfy $\mathbb{E}M_{\tau} \leq 1$ for any stopping time $\tau$.

\section{Summary}\label{section:summary}
WoR sampling and inference naturally arise in a variety of applications such as finite-population studies and permutation-based statistical methods as outlined in Appendix~\ref{section:fourMotivatingExamples}. Furthermore, several machine learning tasks involve random samples from finite `populations', such as sampling (a) points for a stochastic gradient method, (b) covariates in a random order for coordinate descent, (c) columns of a matrix, or (d) edges in a graph.

In order to quantify uncertainty when sequentially sampling WoR from a finite set of objects, this paper developed three new confidence sequences: one in the discrete setting and two in the continuous setting (Hoeffding, empirical-Bernstein). Their construction was enabled by the development of new technical tools---the prior-posterior-ratio martingale, and two exponential supermartingales---which may be of independent interest. We clarified how these can be tuned (role of `prior' or $\lambda$-sequence), and demonstrated their advantages over naive sampling with replacement. Our CSs can be inverted to yield anytime-valid $p$-values to sequentially test arbitrary composite hypotheses. Importantly, these CSs can be efficiently updated, continuously monitored, and adaptively stopped without violating their uniform validity, thus merging theoretical rigor with practical flexibility.

\subsection*{Acknowledgements}
IW-S thanks Serge Aleshin-Guendel for conversations regarding Bayesian methods. AR thanks Steve Howard for early conversations. AR acknowledges funding from an Adobe Faculty Research Award, and an NSF DMS 1916320 grant.

\bibliographystyle{unsrtnat}
\bibliography{confseq_wo_replacement_NeurIPS}

\begin{thebibliography}{25}
\providecommand{\natexlab}[1]{#1}
\providecommand{\url}[1]{\texttt{#1}}
\expandafter\ifx\csname urlstyle\endcsname\relax
  \providecommand{\doi}[1]{doi: #1}\else
  \providecommand{\doi}{doi: \begingroup \urlstyle{rm}\Url}\fi

\bibitem[Howard et~al.(2020+)Howard, Ramdas, McAuliffe, and
  Sekhon]{howard2018uniform}
Steven~R Howard, Aaditya Ramdas, Jon McAuliffe, and Jasjeet Sekhon.
\newblock Time-uniform, nonparametric, nonasymptotic confidence sequences.
\newblock \emph{Annals of Statistics}, 2020+.

\bibitem[Darling and Robbins(1967)]{darling1967confidence}
DA~Darling and HE~Robbins.
\newblock Confidence sequences for mean, variance, and median.
\newblock \emph{Proceedings of the National Academy of Sciences of the United
  States of America}, 58\penalty0 (1):\penalty0 66, 1967.

\bibitem[Lai(1976{\natexlab{a}})]{lai1976confidence}
Tze~Leung Lai.
\newblock On confidence sequences.
\newblock \emph{The Annals of Statistics}, 4\penalty0 (2):\penalty0 265--280,
  1976{\natexlab{a}}.

\bibitem[Robbins and Siegmund(1970)]{robbins_boundary_1970}
Herbert Robbins and David Siegmund.
\newblock Boundary {crossing} {probabilities} for the {Wiener} {process} and
  {sample} {sums}.
\newblock \emph{The Annals of Mathematical Statistics}, 41\penalty0
  (5):\penalty0 1410--1429, 1970.

\bibitem[Lai(1976{\natexlab{b}})]{lai_boundary_1976}
Tze~Leung Lai.
\newblock Boundary {Crossing} {Probabilities} for {Sample} {Sums} and
  {Confidence} {Sequences}.
\newblock \emph{The Annals of Probability}, 4\penalty0 (2):\penalty0 299--312,
  1976{\natexlab{b}}.

\bibitem[Jamieson et~al.(2014)Jamieson, Malloy, Nowak, and
  Bubeck]{jamieson_lil_2014}
Kevin Jamieson, Matthew Malloy, Robert Nowak, and Sébastien Bubeck.
\newblock lil' {UCB}: {An} {Optimal} {Exploration} {Algorithm} for
  {Multi}-{Armed} {Bandits}.
\newblock In \emph{Proceedings of {The} 27th {Conference} on {Learning}
  {Theory}}, volume~35, pages 423--439, 2014.

\bibitem[Kaufmann and Koolen(2018)]{kaufmann2018mixture}
Emilie Kaufmann and Wouter Koolen.
\newblock Mixture martingales revisited with applications to sequential tests
  and confidence intervals.
\newblock \emph{arXiv:1811.11419}, 2018.

\bibitem[Wasserman et~al.(2020)Wasserman, Ramdas, and
  Balakrishnan]{wasserman2019universal}
Larry Wasserman, Aaditya Ramdas, and Sivaraman Balakrishnan.
\newblock Universal inference.
\newblock \emph{Proceedings of the National Academy of Sciences}, 2020.

\bibitem[Howard and Ramdas(2019)]{howard2019sequential}
Steven~R Howard and Aaditya Ramdas.
\newblock Sequential estimation of quantiles with applications to {A/B}-testing
  and best-arm identification.
\newblock \emph{arXiv preprint arXiv:1906.09712}, 2019.

\bibitem[Hoeffding(1963)]{hoeffding_probability_1963}
Wassily Hoeffding.
\newblock Probability {Inequalities} for {Sums} of {Bounded} {Random}
  {Variables}.
\newblock \emph{Journal of the American Statistical Association}, 58\penalty0
  (301):\penalty0 13--30, 1963.

\bibitem[Maurer and Pontil(2009)]{maurer_empirical_2009}
Andreas Maurer and Massimiliano Pontil.
\newblock Empirical {Bernstein} bounds and sample variance penalization.
\newblock In \emph{Proceedings of the Conference on Learning Theory}, 2009.

\bibitem[Howard et~al.(2020)Howard, Ramdas, McAuliffe, and
  Sekhon]{howard_exponential_2018}
Steven~R. Howard, Aaditya Ramdas, Jon McAuliffe, and Jasjeet Sekhon.
\newblock Time-uniform {Chernoff} bounds via nonnegative supermartingales.
\newblock \emph{Probability Surveys}, 17:\penalty0 257--317, 2020.

\bibitem[Serfling(1974)]{serfling1974probability}
Robert~J Serfling.
\newblock Probability inequalities for the sum in sampling without replacement.
\newblock \emph{The Annals of Statistics}, pages 39--48, 1974.

\bibitem[Bardenet and Maillard(2015)]{bardenet2015concentration}
R{\'e}mi Bardenet and Odalric-Ambrym Maillard.
\newblock Concentration inequalities for sampling without replacement.
\newblock \emph{Bernoulli}, 21\penalty0 (3):\penalty0 1361--1385, 2015.

\bibitem[Greene and Wellner(2017)]{greene2017exponential}
Evan Greene and Jon~A Wellner.
\newblock Exponential bounds for the hypergeometric distribution.
\newblock \emph{Bernoulli}, 23\penalty0 (3):\penalty0 1911, 2017.

\bibitem[Fink(1997)]{fink1997compendium}
Daniel Fink.
\newblock A compendium of conjugate priors, 1997.

\bibitem[Balsubramani and Ramdas(2016)]{balsubramani_sequential_2016}
Akshay Balsubramani and Aaditya Ramdas.
\newblock Sequential {Nonparametric} {Testing} with the {Law} of the {Iterated}
  {Logarithm}.
\newblock In \emph{Proceedings of the {Thirty}-{Second} {Conference} on
  {Uncertainty} in {Artificial} {Intelligence}}, 2016.

\bibitem[Wald(1945)]{wald1945sequential}
Abraham Wald.
\newblock Sequential tests of statistical hypotheses.
\newblock \emph{The Annals of Mathematical Statistics}, 16\penalty0
  (2):\penalty0 117--186, 1945.

\bibitem[Johari et~al.(2017)Johari, Koomen, Pekelis, and
  Walsh]{johari2015always}
Ramesh Johari, Pete Koomen, Leonid Pekelis, and David Walsh.
\newblock Peeking at {A}/{B} tests: Why it matters, and what to do about it.
\newblock In \emph{Proceedings of the 23rd ACM SIGKDD International Conference
  on Knowledge Discovery and Data Mining}, pages 1517--1525, 2017.

\bibitem[Shafer and Vovk(2019)]{shafer2019game}
Glenn Shafer and Vladimir Vovk.
\newblock \emph{Game-Theoretic Foundations for Probability and Finance}, volume
  455.
\newblock John Wiley \& Sons, 2019.

\bibitem[Gr{\"u}nwald et~al.(2019)Gr{\"u}nwald, de~Heide, and
  Koolen]{grunwald2019safe}
Peter Gr{\"u}nwald, Rianne de~Heide, and Wouter Koolen.
\newblock Safe testing.
\newblock \emph{arXiv preprint arXiv:1906.07801}, 2019.

\bibitem[Fisher(1956)]{fisher1956mathematics}
Ronald~A Fisher.
\newblock Mathematics of a lady tasting tea.
\newblock \emph{The world of mathematics}, 3\penalty0 (part 8):\penalty0
  1514--1521, 1956.

\bibitem[Ville(1939)]{ville1939etude}
Jean Ville.
\newblock Etude critique de la notion de collectif.
\newblock \emph{Bull. Amer. Math. Soc}, 45\penalty0 (11):\penalty0 824, 1939.

\bibitem[Fan et~al.(2015)Fan, Grama, and Liu]{fan2015exponential}
Xiequan Fan, Ion Grama, and Quansheng Liu.
\newblock Exponential inequalities for martingales with applications.
\newblock \emph{Electronic Journal of Probability}, 2015.

\bibitem[Waudby-Smith and Ramdas(2020)]{waudby2020variance}
Ian Waudby-Smith and Aaditya Ramdas.
\newblock Variance-adaptive confidence sequences by betting.
\newblock \emph{arXiv preprint arXiv:2010.09686}, 2020.

\end{thebibliography}

\newpage

\appendix

\section{Four prototypical examples}
\label{section:fourMotivatingExamples}
The following examples are meant to demonstrate situations where we might care about sequentially quantifying uncertainty for parameters of finite populations (see Figure~\ref{fig:motivatingExamples}).

\subsection*{A. Opinion surveys (discrete categorical)}

Imagine you have access to a registry of phone numbers of a group of 1000 people, such as all residents of a neighborhood, voters in a township, or occupants of a university building. You wish to quickly determine the majority opinion on a categorical question, like preference of Biden vs. Trump. You pick names uniformly at random, call and ask. Obviously, you never call the same person twice. When can you confidently stop? In a typical run on a hypothetical ground truth of 650/350, our method stopped after 123 calls (Figure~\ref{fig:motivatingExamples}A). 

In the example of opinion surveys, the data are discrete and consist of 650 responses showing preference for Biden and 350 showing preference for Trump (encoded as ones and zeros, respectively). The observed data is thus a random permutation of 650 ones and 350 zeros. The CS used was the PPR CS for the hypergeometric distribution with a uniform `working prior' (i.e. $a = b =1$ in the beta-binomial pmf).

\subsection*{B. Permutation $p$-values  (discrete binary)}

Statistical inference is often performed using permutation tests. Formally, the permutation $p$-value is defined as $P_{\text{perm}} := \frac1{m!}\sum_{\pi \in S_m} I(T_m \geq T_{\pi(m)})$, where $T_m, T_{\pi(m)}$ are the original and permuted test statistics on $m$ datapoints, and $S_m$ is the set of all $m$-permutations (size $N=m!$). $P_{\text{perm}}$ is intractable to calculate for large $m$, so it is often approximated by randomly sampling $\pi$ with replacement (often $1000$ times, fixed and arbitrary). Instead, our tools allow a user to construct a CS for $P_{\text{perm}}$ and sequentially sample WoR until the CS is confident about whether $P_{\text{perm}}$ is below or above (say) $0.05$. In one example (small, so we can calculate $P_{\text{perm}}=0.04$ to verify accuracy), we stopped after 876 steps (Figure~\ref{fig:motivatingExamples}B).

The permutation test used in this example is a slight modification of the famous `Lady Tasting Tea' experiment \cite{fisher1956mathematics}. The experiment proceeds as follows.

There are 12 cups of tea with milk, half of which had the tea poured first, and the other half had milk poured first. The tea expert is told that half of the cups are milk-first and the other half are tea-first and is tasked with determining which ones are which. The null hypothesis is that the tea expert has no ability to distinguish between tea-first and milk-first (i.e. their guesses are independent of the order of milk/tea). Suppose they guess 10 out of 12 cups correctly. The statistical question becomes, ``what is the probability of guessing 10 or more cups correctly if the expert is guessing randomly?''. This probability is exactly the permutation $p$-value that the statistician is interested in.

To calculate this permutation $p$-value, we consider the set of all possible random guesses that the tea expert could have made, and compute the fraction of those which identify 10 or more cups correctly. If we randomly sample a sequence of possible guesses from the set of $\binom{12}{6}$ possible guesses and record whether 10 or more cups are correctly identified, then observations are a random stream of ones and zeros. We then construct a PPR CS with a uniform `working prior' for the number of ones, $N^+$ in this set to arrive at a CS for the permutation $p$-value, ${P := \tfrac{N^+}{\binom{12}{6}}}$.

\subsection*{C. Shapley values (bounded real-valued)}

First developed in game theory, Shapley values have been recently proposed as a measure of variable or data-point importance for supervised learning. Given a set of players $\{ 1, \dots, B \}$ and a reward function $\nu$, the Shapley value $\phi_b$ for player $b$ can be written as an average of $B!$ function evaluations, one for each permutation of $\{1, \dots, B\}$. As above, $\phi_b$ is intractable to compute and Monte-Carlo techniques are popular. This real-valued setting requires different CS techniques from the categorical setting. As Figure~\ref{fig:motivatingExamples}C unfolds from left to right (with $B=7$), it can be stopped adaptively with valid confidence bounds on all $\{\phi_b\}_{b=1}^B$. 
In this example, we consider a simple cost allocation problem. Suppose there are $n$ people that wish to share transportation to get from point A to their respective destinations, which are all in succession on the same street. Suppose that the cost of going from point A to the $i^\text{th}$ person's destination costs $c_i$, and without loss of generality suppose $c_1 < c_2 \cdots < c_n$. In this particular example, we used $n = 7$ with costs of 1, 10, 40, 80, 130, 175, and 200. The `cost', $\nu : 2^{[n]} \rightarrow \mathbb R$ of a trip is defined in the following natural way,
\begin{align*}
    &\nu(\emptyset) = 0\\
    &\nu(\{i\}) = c_i\\
    &\nu(S) = c_j \text{ where } c_j \geq c_k \text{ for all } k, j \in S
\end{align*}
The \textit{Shapley value}, $\phi_i$ for person $i$ can be written as,
\begin{equation}
    \phi_i = \frac{1}{n!} \sum_{\pi} \left [ \nu(S_{\pi, i} \cup \{i\}) - \nu(S_{\pi, i}) \right]
\end{equation}
where the sum is taken over all permutations $\pi$ of $[n]$, and $S_{\pi, i}$ is the set of numbers to the left of $i$ in the permutation $\pi([n])$.

Since the Shapley value $\phi_i$ is an average of $n!$ numbers, it may be tedious to compute for large $n$ especially when $\nu$ cannot be computed quickly. In our case, the summands have a crude upper bound of $c_n$ and a lower bound of 0 so we can randomly sample WoR from the set of permutations on $[n]$ to construct the empirical Bernstein CS of Theorem~\ref{theorem:empiricalBernsteinConfseq} with the $\lambda$-sequence of \eqref{eq:empBernLambdaSeq}. After 1252 permutations, we are able to conclude with high confidence which player has the highest Shapley value.


\subsection*{D. Tracking interventions (bounded real-valued)}

Suppose a state school board is interested in introducing a new program to help students improve their standardized testing skills. Before deploying it to each of their 3000 public schools, the board decides to incrementally introduce the program to randomly selected schools, measuring standardized test scores before and after its introduction. The board can construct a CS for the overall percentage increase in test scores (which could get worse), and stop the experiment once they are confident about the program's effectiveness. In Figure~\ref{fig:motivatingExamples}D, with effect size 20\%, the board can confidently decide to mandate the program statewide after 260 random schools have been trialed, but they may also continue tracking progress and stop later.
In this example, we simply generated 3000 observations from a Beta(3, 2) distribution, appropriately scaled to be between -100 and 100 (representing percentage changes in test scores). To construct a CS for the average change in test scores, we used the Hoeffding-type CS optimized for times 10, 100, and 1000. Note that this CS would be tighter if the empirical Bernstein CS were used as the Beta(3, 2) has a relatively small variance.

\begin{figure}[htb!]
	\centering \includegraphics[width=0.8\textwidth]{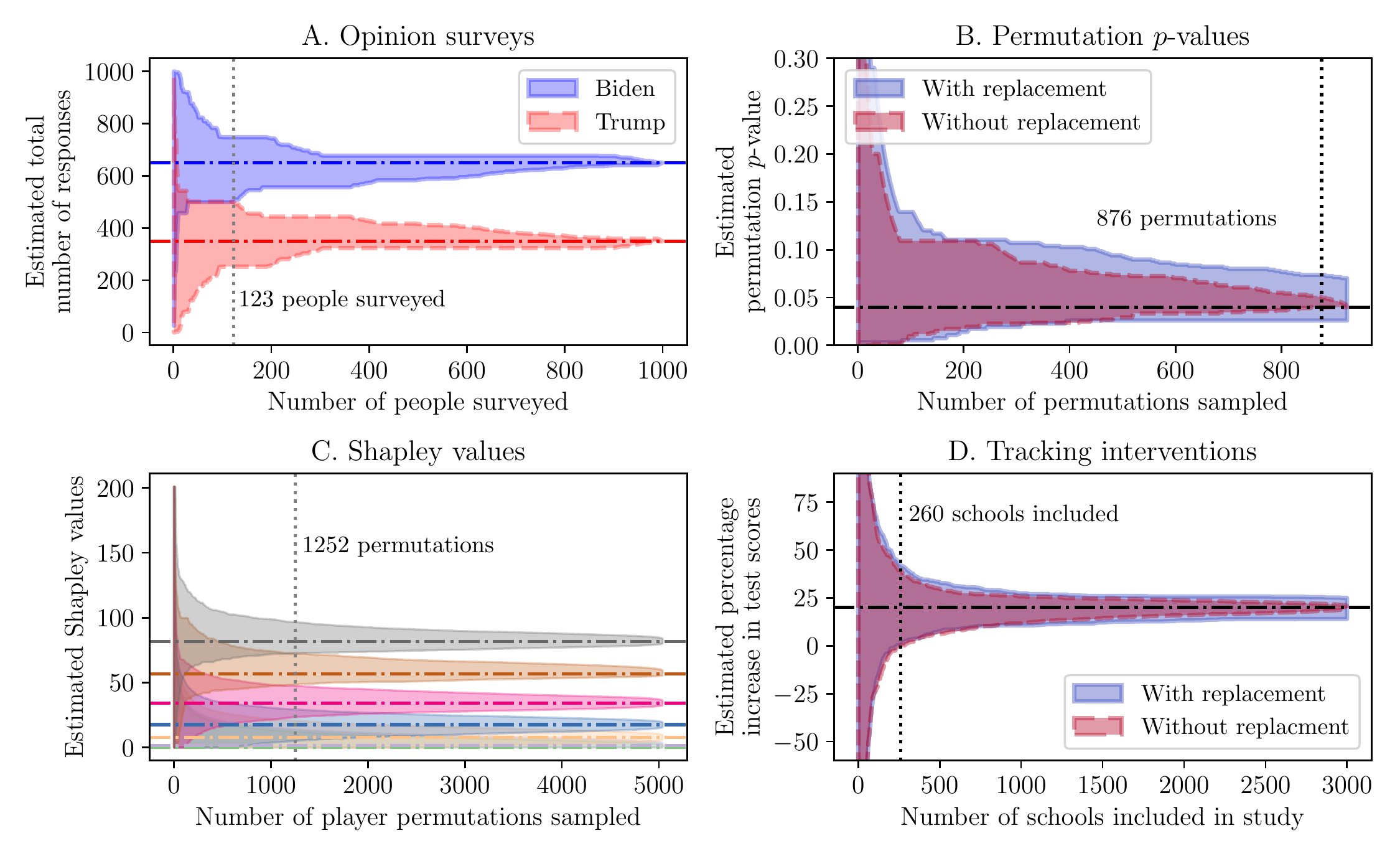}
	\caption{Typical simulation runs for the aforementioned examples, with more details in the Supplement. All experiments can be proactively monitored, optionally continued and adaptively stopped.}
	\label{fig:motivatingExamples}
\end{figure}

\section{Proofs of the main results}

\subsection{Proof of Proposition~\ref{prop:priorPosteriorMartingale}}
\label{proof:priorPosteriorMartingale}

The proof is broken into two steps. First, we prove that with respect to the filtration $(\mathcal F_t)_{t=0}^N$ outlined in Section~\ref{section:observationModel}, the prior-posterior ratio (PPR) evaluated at the true $\theta^\star \in \Theta$,
\begin{equation}
\label{eqn:priorPosteriorRatio}
    R_t(\theta^\star) := \frac{\pi_0(\theta^\star)}{\pi_t(\theta^\star)},
\end{equation}
is a nonnegative martingale with initial value one. Later, we invoke Ville's inequality~\cite{ville1939etude,howard_exponential_2018} for nonnegative supermartingales to construct the CS. 

\paragraph {Step 1.}
Let $\pi_0$ be any prior on $\Theta$ that assigns nonzero mass everywhere. Define the prior-posterior ratio, $R_t(\theta)$ as in \eqref{eqn:priorPosteriorRatio}. Writing the conditional expectation of $R_{t+1}(\theta^\star)$ given $X_1^t$ for any $t \in \{1, \dots, N\}$ in its integral form,
\begin{align*}
	\mathbb E(R_{t+1}(\theta^\star) \mid X_1^{t}) &= \int_{\mathcal X_{t+1}} \frac{\pi_0(\theta^\star)}{\pi_{t+1}(\theta^\star)} p_{\theta^\star}(x_{t+1} \mid X_1^t) dx_{t+1}\\
	&= \int_{\mathcal X_{t+1}} \frac{\pi_0(\theta^\star) \int_{\Theta} p_\eta (X_1^t, x_{t+1}) \pi_0(\eta) d\eta}{p_{\theta^\star}(X_1^t, x_{t+1}) \pi_0(\theta^\star)} p_{\theta^\star}(x_{t+1}) \mid X_1^t) dx_{t+1} & \text{(Bayes' rule)}\\
	&= \int_{\mathcal X_{t+1}} \frac{\pi_0(\theta^\star) \int_{\Theta} p_\eta (X_1^t, x_{t+1}) \pi_0(\eta) d\eta}{p_{\theta^\star}(X_1^t) \pi_0(\theta^\star)} dx_{t+1} & \text{(Bayes' rule again)}\\
    &= \int_{\mathcal X_{t+1}}\frac{\pi_0(\theta^\star)\int_\Theta p_\eta(X_1^t, x_{t+1}) \pi_0(\eta) d\eta}{\pi_t(\theta^\star) \int_\Theta p_\lambda(X_1^t) \pi_0(\lambda) d\lambda } dx_{t+1} & \text{(Bayes' rule again)} \\
    &= \frac{\pi_0(\theta^\star)}{\pi_t(\theta^\star)}\int_{\mathcal X_{t+1}}\frac{\int_\Theta p_\eta(X_1^t, x_{t+1}) \pi_0(\eta) d\eta}{\int_\Theta p_\lambda(X_1^t) \pi_0(\lambda)d\lambda} dx_{t+1} \\
    &= \frac{\pi_0(\theta^\star)}{\pi_t(\theta^\star)}\frac{\int_\Theta\int_{\mathcal X_{t+1}} p_\eta(X_1^t, x_{t+1}) dx_{t+1} \pi_0(\eta) d\eta}{\int_\Theta p_\lambda(X_1^t) \pi_0(\lambda)d\lambda} & \text{(Fubini's theorem)}\\
    &= \frac{\pi_0(\theta^\star)}{\pi_t(\theta^\star)}\frac{\int_\Theta p_\eta(X_1^t) \pi_0(\eta) d\eta}{\int_\Theta p_\lambda(X_1^t) \pi_0(\lambda)d\lambda} \quad = R_t(\theta^\star).\\
\end{align*}
Furthermore, for the case when $t=0$, 
\begin{align*}
    \mathbb E( R_1(\theta^\star)) &= \int_{\mathcal X_1} \frac{\pi_0(\theta^\star)\int_\Theta p_\eta(X_1) \pi_0(\eta) d\eta}{p_{\theta^\star}(X_1) \pi_0(\theta^\star)}p_{\theta^\star}(X_1)dx_1 \\
    &= \frac{\pi_0(\theta^\star)}{\pi_0(\theta^\star)} \int_{\mathcal X_1} \int_\Theta p_\eta(X_1) \pi_0(\eta) d\eta dx_1 & \text{(Bayes' rule)} \\
    &= \frac{\pi_0(\theta^\star)}{\pi_0(\theta^\star)}\int_\Theta \int_{\mathcal X_1} p_\eta(X_1)dx_1 \pi_0(\eta) d\eta & \text{(Fubini's theorem)}\\
    &= \frac{\pi_0(\theta^\star)}{\pi_0(\theta^\star)}\int_\Theta \pi_0(\eta) d\eta \quad = \frac{\pi_0(\theta^\star)}{\pi_0(\theta^\star)} = R_0 = 1.
\end{align*}

Establishing that $R_t(\theta^\star)$ is a nonnegative martingale with initial value one completes the first step.

\paragraph{Step 2.} 
Ville's inequality for nonnegative supermartingales \cite{ville1939etude, howard_exponential_2018} implies that for any $\beta > 0$,
\[ \Pr \left (\exists t \in [ N ] : R_t(\theta^\star) \geq \beta \right ) \leq \frac{\E (R_0(\theta^\star))}{\beta }. \]
In particular, for a threshold $\alpha \in (0, 1)$,
\begin{equation} 
\label{eqn:Ville}
\Pr\Big (\exists t \in [N] : R_t(\theta^\star) \geq 1/\alpha \Big ) \leq \alpha. 
\end{equation}
Define the sequence of sets for $t \in [N]$,
\[ C_t := \{ \theta : R_t(\theta) \leq 1/\alpha \}. \]
As a consequence of \eqref{eqn:Ville}, we have that
\[ \Pr\left (\forall t \in [N],\ \theta^\star \in C_t\right) \geq 1-\alpha, \]
as desired, which completes the proof.

\subsection{Proof of Theorem~\ref{theorem:hoeffdingConfseq}}
\label{proof:hoeffdingWor}

\begin{proof}
Similar to the proof of Proposition \ref{prop:priorPosteriorMartingale}, we proceed in two steps. First, we show that the exponential Hoeffding-type process \eqref{eqn:exponentialHoeffding} is a nonnegative supermartingale with respect to the filtration outlined in Section~\ref{section:observationModel}. We then apply Ville's inequality to this supermartingale and ultimately obtain the bound stated in the theorem.

We prove the bound for $[0, 1]$-bounded random variables but the general result holds by taking any $[\ell, u]$-bounded random variable, $X_i$ and applying the transformation, $X_i \mapsto (X_i - \ell) / (u-\ell)$

\paragraph{Step 1.} 

Let $(\Fcal_t)_{t=0}^N$ be the filtration defined in Section~\ref{section:observationModel}. Furthermore, let $\lambda_t \equiv \lambda_t(X_1, \dots, X_{t-1})$ be a sequence of $\Fcal_{t-1}$-measurable random variables. Consider the exponential Hoeffding-type process $(M_t^H)_{t=0}^N$ with a `predictable mixture',
\[ M_t^H := \exp \left \{ \sum_{i=1}^t \left [ \lambda_i \left ( X_i - \mu + Z_{i-1}^\star \right ) - \frac{\lambda_i^2}{8}\right] \right \} \equiv \prod_{i=1}^t \exp \left \{ \lambda_i \left ( X_i - \mu + Z_{i-1}^\star \right ) - \frac{\lambda_i^2}{8} \right \} \]
where $Z_i^\star = \frac{1}{N-i}\sum_{j=1}^i (X_j - \mu)$ and $M_0^H = 0$ by convention. Writing the conditional expectation of this process for any $t \geq 1$,
\begin{align*}
    \E(M_{t+1}^H \mid \Fcal_t) &= \E \left ( \prod_{i=1}^{t+1} \exp\left \{ \lambda_i (X_i - \mu + Z_{i-1}^\star) - \frac{\lambda_i^2}{8} \right \} \Bigm | \Fcal_{t}\right ) \\
    &= M_t^H \cdot \E\left ( \exp \left \{ \lambda_{t+1}(X_{t+1} - \mu + Z_{t}^\star) - \frac{\lambda_{t+1}^2}{8} \right \} \Bigm | \Fcal_t \right ).
\end{align*}
Using the fact that $\E(X_{t+1} - \mu + Z_t^\star \mid \Fcal_t) = 0$, the fact that $X_{t+1} \in [0, 1]$, and that $\lambda_{t+1}$ is $\Fcal_t$-measurable, we have by sub-Gaussianity of bounded random variables, 
\[\E\left ( \exp \left \{ \lambda_{t+1}(X_{t+1} - \mu + Z_{t}^\star)\right \} \Bigm | \Fcal_t \right ) \leq \exp\left \{ \frac{\lambda_{t+1}^2}{8} \right \}\]
and thus $\E(M_{t+1}^H \mid \Fcal_t) \leq M_t^H$. Therefore, with respect to the filtration $(\Fcal_t)_{t=0}^N$, we have that $M_t^H$ is a nonnegative supermartingale.

\paragraph{Step 2.} Now that we have shown that $M_t^H$ is a nonnegative supermartingale, we may apply Ville's inequality to obtain,
\[ \Pr \left(\exists t \in [N] : M_t^H \geq \frac{1}{\alpha} \right) \leq \alpha. \]
In particular, with probability at least $(1-\alpha)$, we have that for all $t \in [N]$, $M_t^H < \frac{1}{\alpha}$.

\paragraph{Step 3.} `Inverting' the above statement and solving for $\widehat \mu_t(\lambda_1^t) - \mu$, we get that with probability at least $(1-\alpha)$, for all $t \in [N]$,
\[ \widehat \mu_t(\lambda_1^t) - \mu < \frac{\sum_{i=1}^t \lambda_i^2 /8 + \log(1/\alpha)}{\sum_{i=1}^t \lambda_i \left ( 1 + \frac{i-1}{N-i+1} \right )}. \]
Applying all of the aforementioned logic to $-X_1, \dots, -X_t$ and $-\mu$, and taking a union bound, we arrive at the desired result, 
\[ \Pr \left ( \exists t \in [N] : | \widehat \mu_t(\lambda_1^t) - \mu| \geq  \frac{\sum_{i=1}^t \lambda_i^2 /8 + \log(2/\alpha)}{\sum_{i=1}^t \lambda_i \left ( 1 + \frac{i-1}{N-i+1} \right )} \right ) \leq \alpha, \]
which completes the proof.

\end{proof}

\textbf{Remark: $\widehat \mu_t$ is unconditionally unbiased.} Recalling the advantage term $A_t := \sum_{i=1}^{t}\tfrac{i-1}{N-i+1}$, a short calculation shows that $\widehat \mu_t$ \eqref{eqn:withoutReplaceMeanEst} has conditional expectation equaling a convex combination of $\widehat \mu_t,\mu$:
\[
\E[\widehat \mu_{t+1} | X_1^t ] = \frac{1 + A_{t+1} - A_t}{t + 1+ A_{t+1}} \mu + \frac{t + A_t}{t +1+ A_{t+1}} \widehat \mu_t.
\]
Multiplying both sides by $t+1+A_{t+1}$, we can write it in a recursive, telescoping form:
\[
\E[(t+1+A_{t+1})\widehat \mu_{t+1} | X_1^t ] = \mu + (A_{t+1}-A_t) \mu + (t+A_t) \widehat \mu_t.
\]
Taking expectation with respect to $X_t|X_1^{t-1}$, and using the above equation to evaluate the last term,
\[
\E[(t+1+A_{t+1})\widehat \mu_{t+1} | X_1^{t-1} ] = 2\mu + (A_{t+1}-A_{t-1}) \mu + (t-1+A_{t-1}) \widehat \mu_{t-1}.
\]
Unrolling this process out, we see that $\E[(t+1+A_{t+1})\widehat \mu_{t+1}] = (t+1)\mu + (A_{t+1} - A_0)\mu$. Since $A_0\equiv0$, we conclude that $\widehat \mu_{t+1}$ is an unconditionally unbiased estimator of $\mu$. 

Interestingly, the without-replacement mean estimator is not necessarily `consistent' (in the sense of recovering $\mu$ after all $N$ samples are drawn). However, the concept of consistency is subtle for finite populations as there is no longer any uncertainty after all samples are drawn. In any case, the without-replacement mean estimator was not introduced to replace the usual sample mean estimator in all without-replacement settings, but was simply the quantity that resulted from attempting to develop exponential supermartingales within this sample scheme. 
\subsection{Proof of Theorem~\ref{theorem:empiricalBernsteinConfseq}}
\label{proof:empiricalBernsteinWor}

\begin{proof}
    Much like the proof of Theorem \ref{theorem:hoeffdingConfseq}, the proof proceeds in three steps: (1) showing that an exponential empirical Bernstein-type process is a supermartingale, (2) applying Ville's inequality, and (3) inverting the process and taking a union bound. Again, we prove the result for $[0,1]$-bounded random variables since for an $[\ell, u]$-bounded random variable $X_i$, one can make the transformation $X_i \mapsto (X_i - \ell) / (u-\ell)$
    \paragraph{Step 1.} Let $(\Fcal_t)_{t=0}^N$ be the filtration defined in Section~\ref{section:observationModel}. Let $\lambda_t \equiv \lambda_t(X_1, \dots, X_{t-1})$ be a sequence of $\Fcal_{t-1}$-measurable random variables. Consider the exponential empirical Bernstein-type process, $(M_t^E)_{t=0}^N$ with a `predictable mixture',
    \begin{align*}
        M_t^E &:= \exp\left \{ \sum_{i=1}^t \left [ \lambda_i\left (X_i - \mu + Z_{i-1}^\star \right) - 4(X_i - \widehat \mu_{i-1})^2 \psi_E(\lambda_i) \right ] \right \}\\
        &\equiv \prod_{i=1}^t \exp\left \{  \lambda_i\left (X_i - \mu + Z_{i-1}^\star\right ) - 4(X_i - \widehat \mu_{i-1})^2 \psi_E(\lambda_i) \right \}
    \end{align*}
    where $M_0^E := 0$. Writing out the conditional expectation of $M_{t+1}^E$ given $\Fcal_t$ for $t \in [N]$,
    \begin{align*} 
    \E \left ( M_{t+1}^E \mid \Fcal_t\right ) &= M_t^E \cdot \E \left ( \exp\left \{ \lambda_{t+1}\left (X_{t+1} - \mu + Z_{t}^\star \right) - 4\psi_E(\lambda_{t+1}) \left ( X_{t+1} - \widehat \mu_{t} \right )^2 \right \} \Bigm | \Fcal_t \right ).
    \end{align*}
    Therefore, it suffices to show that for any $t \in [N]$,
    \[ \E \left ( \exp\left \{ \lambda_{t+1}\left (X_{t+1} - \mu + Z_{t}^\star \right) - 4\psi_E(\lambda_{t+1}) \left ( X_{t+1} - \widehat \mu_{t} \right )^2 \right \} \Bigm | \Fcal_t \right ) \leq 1.\]
    For succinctness, denote 
	\[Y_{t+1} := X_{t+1} + \frac{1}{N-t} \sum_{j=1}^t X_j - \frac{N}{N-t}\mu ~~~\text{ and }~~~ \delta_t := \widehat \mu_t + \frac{1}{N-t}\sum_{j=1}^t X_j - \frac{N}{N-t}\mu. \]
	Note that $Y_{t+1}$ is conditionally mean zero.
	It then suffices to prove that for any $(0, 1)$-bounded, $\Fcal_t$- measurable $\lambda_{t+1} \equiv \lambda_{t+1}(X_1, \dots, X_t)$, 
	\[ \E \Bigg [\exp \Bigg \{ \lambda_{t+1} Y_{t+1} - 4(Y_{t+1} - \delta_t)^2\psi_{E}(\lambda_{t+1}) \Bigg \}  \Bigm | \Fcal_t \Bigg] \leq 1.\]
	Indeed, in the proof of Proposition 4.1 in Fan et al. \cite{fan2015exponential}, $\exp\{\xi\lambda  - 4\xi^2 \psi_{E} (\lambda)\} \leq 1 + \xi \lambda$ for any $\lambda \in [0, 1)$ and $\xi \geq -1$. Setting $\xi := Y_{t+1} - \delta_t = X_{t+1} - \widehat \mu_{t}$,
	\begin{align*}
		&\E \Bigg [\exp \Bigg \{ \lambda_{t+1} Y_{t+1} - 4(Y_{t+1} - \delta_t)^2 \psi_{E}(\lambda_{t+1} ) \Bigg \}  \Bigm | \Fcal_t \Bigg]\\
		&=\E \Big [\exp \Big \{ \lambda_{t+1} (Y_{t+1}-\delta_t) - 4(Y_{t+1} - \delta_t)^2 \psi_{E}(\lambda_{t+1}) \Big \}  \bigm | \Fcal_t \Big] \exp(\lambda_{t+1} \delta_t)\\
		    &\leq \E\Big [1 + (Y_{t+1} - \delta_t )\lambda_{t+1} \mid \Fcal_t \Big ]\exp(\lambda_{t+1} \delta_t) \quad \stackrel{(i)}{=} \E\big [1 - \delta_t \lambda_{t+1} \mid \Fcal_t \big ]\exp(\lambda_{t+1} \delta_t) ~ \quad 
		       \stackrel{(ii)}{\leq} 1,
	\end{align*}
	where equality $(i)$ follows from the fact that $Y_{t+1}$ is conditionally mean zero as mentioned earlier,
	and inequality $(ii)$ follows from the inequality $1-x \leq \exp(-x)$ for all $x \in \mathbb R$.
	
\paragraph{Step 2.} Now that we have established that $M_t^E$ is a nonnegative supermartingale, we apply Ville's inequality to obtain,
\[ \Pr\left ( \exists t \in [N] : M_t^E \geq \frac{1}{\alpha} \right ) \leq \alpha. \]
\paragraph{Step 3.} Solving for $\widehat \mu_t - \mu$ in the inequality in the above probability statement, we get that
\[ \Pr\left ( \exists t \in [N] :  \widehat \mu_t - \mu \geq \frac{\sum_{i=1}^t 4\psi_E(\lambda_i)(X_i - \widehat \mu_{i-1})^2 + \log(1/\alpha)}{\sum_{i=1}^t \lambda_i \left ( 1 + \frac{i-1}{N-i+1} \right )} \right ) \leq \alpha. \]
Applying the same logic to $-X_1, \dots, -X_t$ and $-\mu$, and taking a union bound, we arrive at the desired result,
\[ \Pr\left ( \exists t \in [N] :  |\widehat \mu_t - \mu| \geq \frac{\sum_{i=1}^t 4\psi_E(\lambda_i)(X_i - \widehat \mu_{i-1})^2 + \log(2/\alpha)}{\sum_{i=1}^t \lambda_i \left ( 1 + \frac{i-1}{N-i+1} \right )} \right ) \leq \alpha.\]
\end{proof}

\section{Sampling multivariate binary variables WoR}
\label{section:priorPosteriorMultivariate}

The prior-posterior martingale from Section~\ref{section:binaryrvwithoutreplace} extends naturally to the multivariate case as follows. Suppose we have $N$ objects, each belonging to one of $K \geq 2$ categories, and there are $N_1^\star, \dots, N_K^\star$ objects from each category, respectively. Let $c$ denote the category of a randomly sampled object, and let
\[ \mathbf{X} := \begin{pmatrix} \mathds{1}(c = 1) & \mathds{1}(c=2) &\cdots & \mathds{1}(c=K) \end{pmatrix}.\] Then $\mathbf{X}$ is said to follow a multivariate hypergeometric distribution with parameters $N$, $(N_1^\star, \dots, N_K^\star)$, and $n = 1$ and has probability mass function,
\[ \Pr(\mathbf{X} = x) = \frac{\prod_{k=1}^K \binom{N_k^\star}{x_k}}{\binom{N}{n}}.\]
Note that $\sum_{k=1}^K x_k = 1$ and $x_k \in \{0, 1\}$ for each $k \in \{1, \dots, K\}$. More generally, if $n \geq 2$ objects are sampled WoR, then $\mathbf{X}$ would have the same probability mass function with $x_1, \dots, x_K \in \{1, \dots, n\}$ such that $\sum_{k=1}^K x_k = n$. As in Section~\ref{section:binaryrvwithoutreplace}, we will consider the case where $n=1$ for notational simplicity.

Let us now view this random variable and the fixed multivariate parameter $\MHGP := (N_1^\star, \dots, N_K^\star)$ from the Bayesian perspective as in Section~\ref{section:binaryrvwithoutreplace} by treating $\MHGP$ as a random variable which we denote by $\widetilde{\mathbf{N}}^\star$ to avoid confusion. Suppose that
\[ \mathbf{X}_t \mid (\widetilde{\mathbf{ N}}^\star, \mathbf{X}_1, \dots, \mathbf{X}_{t-1}) \sim \MHG \Bigg (N - (t-1), \widetilde{\mathbf{N}}^\star - \sum_{i=1}^{t-1} \mathbf{X}_i, 1 \Bigg), \quad \text{and}\]
\[ \widetilde{\mathbf{N}}^\star \sim \DM(N, \mathbf{a}) \]
for some $\mathbf{a}:= (a_1, \dots, a_K)$ with $a_k > 0$ for each $k \in \{ 1, \dots, K \}$. Then for any $t \in \{ 1, 2, \dots, N \}$,
\[ \widetilde{\mathbf{N}}^\star - \sum_{i=1}^t \mathbf{X}_i \mid (\mathbf{X}_1, \dots, \mathbf{X}_t) \sim \DM \Bigg (N-t, \mathbf{a} + \sum_{i=1}^t \mathbf{X}_i \Bigg). \]
With these prior and posterior distributions, we're ready to invoke Proposition~\ref{prop:priorPosteriorMartingale} to obtain a sequence of confidence sets for $\mathbf{N}^\star$.

\begin{theorem}[Confidence sequences for multivariate hypergeometric parameters]
	\label{theorem:multHyperGeoConfseq}
Suppose that 

\[\mathbf{X}_t \mid (\mathbf{X}_1, \dots, \mathbf{X}_{t-1}) \sim \MHG \Bigg (N - (t-1), \MHGP - \sum_{i=1}^{t-1} \mathbf{X}_i, 1 \Bigg).\]

Let $\pi_0$ and $\pi_t$ be the Dirichlet-multinomial prior with positive parameters $\mathbf{a} = (a_1, \dots, a_K)$ and corresponding posterior, $\pi_t$, respectively. Then the sequence of sets $(C_t)_{t=0}^N$ defined by
\[C_t := \Bigg \{ \mathbf{n} \in \{0, \dots, N\}^{K} : \sum_{k=1}^K \mathbf{n}_k = N \text{ and } \frac{\pi_0( \mathbf{n})}{\pi_t( \mathbf{n})} < \frac{1}{\alpha} \Bigg\}\] is a $(1-\alpha)$-CS for $\MHGP$. Furthermore, the running intersection, $\bigcap_{s \leq t} C_t$ is a $(1-\alpha)$-CS for $\MHGP$.

\end{theorem}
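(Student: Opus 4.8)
The plan is to mirror the one-paragraph proof of Theorem~\ref{theorem:hyperGeoConfseq}: first establish that the multivariate hypergeometric sampling scheme admits the Dirichlet–multinomial family as a `working' conjugate prior \emph{up to a shift}, and then invoke Proposition~\ref{prop:priorPosteriorMartingale} directly at $\theta^\star = \MHGP$. Everything in the statement reduces to these two moves, so the proof is essentially a bookkeeping exercise on top of the PPR machinery already developed.

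First I would verify the conjugacy claim displayed just before the theorem: if $\widetilde{\mathbf{N}}^\star \sim \DM(N,\mathbf a)$ and $\mathbf X_t \mid (\widetilde{\mathbf N}^\star, \mathbf X_1^{t-1}) \sim \MHG(N-(t-1), \widetilde{\mathbf N}^\star - \sum_{i<t}\mathbf X_i, 1)$, then the working posterior on the shifted variable $\widetilde{\mathbf N}^\star - \sum_{i\le t}\mathbf X_i$ is $\DM(N-t, \mathbf a + \sum_{i\le t}\mathbf X_i)$. This is a direct computation: write the Dirichlet–multinomial prior mass function in terms of Gamma functions, multiply by the product of the (shifted) multivariate hypergeometric likelihoods, let the binomial/Gamma-function ratios telescope across the $t$ sampling steps, and match the result to the Dirichlet–multinomial mass with the updated parameters. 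It suffices to check the one-step update $t\mapsto t+1$ and induct; the binary case of Section~\ref{section:binaryrvwithoutreplace} is exactly the specialization $K=2$, so the same algebra applies with one extra index.

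With conjugacy in hand, the observation model \eqref{eq:sampling-wo-replacement} applied to a population with category counts $\MHGP$ produces precisely $\mathbf X_t \mid \mathbf X_1^{t-1} \sim \MHG(N-(t-1), \MHGP - \sum_{i<t}\mathbf X_i, 1)$, so $\MHGP$ plays the role of the fixed $\theta^\star$ in Proposition~\ref{prop:priorPosteriorMartingale}, and the Dirichlet–multinomial working prior $\pi_0$ assigns nonzero mass to every integer vector on the simplex $\{\mathbf n : \sum_k \mathbf n_k = N\}$. Proposition~\ref{prop:priorPosteriorMartingale} then immediately yields that $(R_t(\MHGP))_{t=0}^N = (\pi_0(\MHGP)/\pi_t(\MHGP))_{t=0}^N$ is a nonnegative martingale and that $C_t = \{\mathbf n : \pi_0(\mathbf n)/\pi_t(\mathbf n) < 1/\alpha\}$ (restricted to that simplex) is a $(1-\alpha)$-CS. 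For the running intersection, observe that $\{\forall t,\ \MHGP \in C_t\}$ has probability at least $1-\alpha$ and is contained in $\{\forall t,\ \MHGP \in \bigcap_{s\le t} C_s\}$, so the latter is also a valid $(1-\alpha)$-CS; moreover $\pi_N$ is a point mass at $\MHGP$, so these sets shrink to the singleton $\{\MHGP\}$ at $t=N$.

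The only genuine content is the conjugacy verification; the rest is a black-box application of Proposition~\ref{prop:priorPosteriorMartingale}. The mild obstacle in that step is tracking the shift correctly — the posterior is a law for $\widetilde{\mathbf N}^\star - \sum_{i\le t}\mathbf X_i$, whereas $R_t$ is evaluated at $\widetilde{\mathbf N}^\star$ itself — and confirming that the resulting law on $\widetilde{\mathbf N}^\star$ still charges every point of the relevant simplex, so that the nonzero-mass hypothesis of Proposition~\ref{prop:priorPosteriorMartingale} holds. One should also note that here $\Theta$ is the finite set of integer count-vectors summing to $N$, so the ``integrals'' in Proposition~\ref{prop:priorPosteriorMartingale} are finite sums against counting measure; its argument is agnostic to this and carries over verbatim.
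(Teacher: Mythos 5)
Your proposal is correct and follows essentially the same route as the paper, whose proof is simply a direct application of Proposition~\ref{prop:priorPosteriorMartingale} with the Dirichlet--multinomial `working prior' for the multivariate hypergeometric model; your additional checks (the conjugacy-with-shift computation, the nonzero-mass condition on the simplex, and the running-intersection argument) are exactly the details the paper leaves implicit.
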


\begin{proof}
This is a direct consequence of Theorem \ref{prop:priorPosteriorMartingale} applied to the multivariate hypergeometric distribution with a Dirichlet-multinomial prior.
\end{proof}

\section{Coupling the `prior' with the stopping rule to improve power}\label{section::additional-sims}

Somewhat at odds with their intended use-case, working `priors' need not always be chosen to reflect the user's prior information. When approximating $p$-values for permutation tests, for example, it is of primary interest to conclude whether $P_\text{perm}$ is above or below some prespecified $\alpha_\text{perm} \in (0,1)$ with high confidence as quickly as possible. As discussed in Theorem~\ref{theorem:hyperGeoConfseq}, the CS for $P_\text{perm}$ will shrink to a single point regardless of the prior, so if $P_\text{perm}$ is much larger or much smaller than $\alpha_\text{perm}$, we expect to discover the decision rule, ``reject'' versus ``do not reject'' rather quickly. It is when $P_\text{perm}$ is very close to $\alpha_\text{perm}$ that the user desires sharper confidence intervals, so that they can make decisions sooner (see Figure \ref{fig:uniformStrongPriorPermutation}). In this case, they simply need to place more mass near the decision boundary, with a necessary tradeoff between the sharpness of confidence sets near $\alpha_\text{perm}$ and the size of the neighborhood around $\alpha_\text{perm}$ for which this sharpness is realized. 

\begin{figure}[!htb]
	\includegraphics[width=1\textwidth]{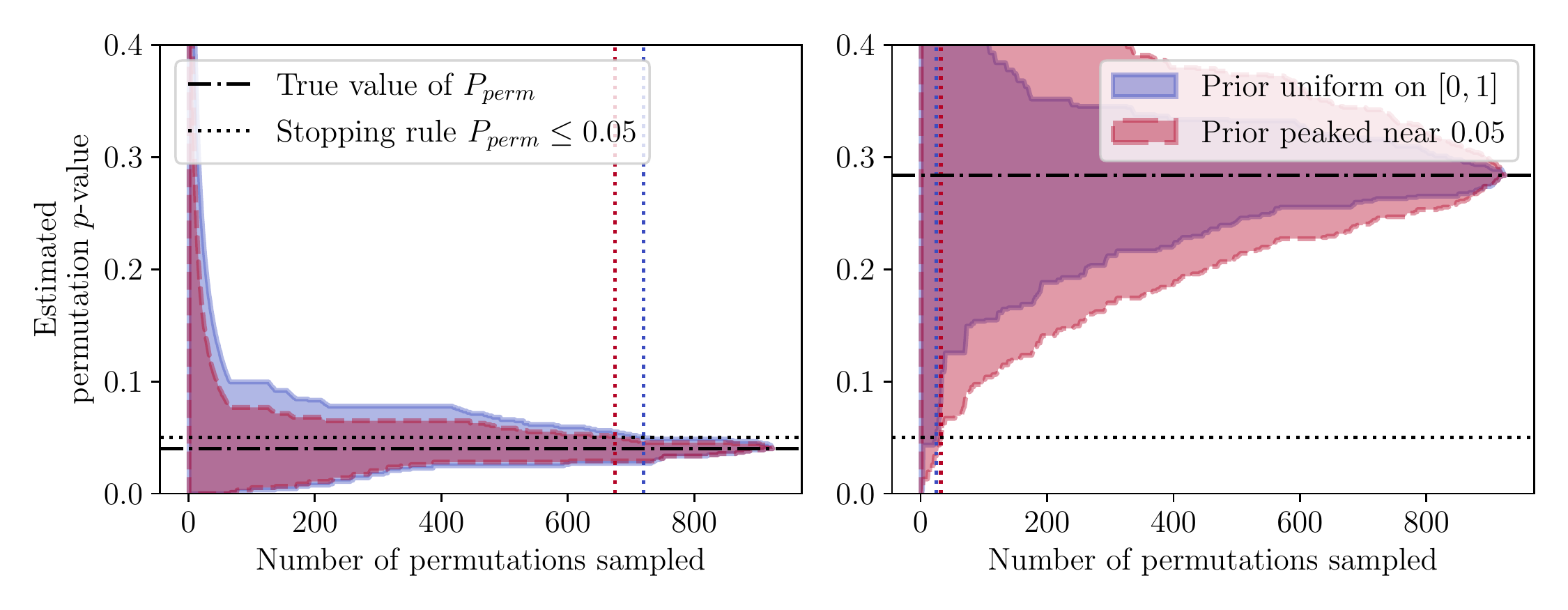}
	\caption{Comparing priors for Example B: using a uniform prior versus a prior peaked near 0.05. When the decision rule is to stop whenever the CS is entirely on one side of 0.05, coupling the prior to the decision rule leads to earlier stopping.}
	\label{fig:uniformStrongPriorPermutation}
\end{figure}

\section{Choosing a \texorpdfstring{$\lambda$}{lambda}-sequence for Hoeffding and empirical Bernstein CSs}
\label{section:choiceOfLambdaSequence}
Recall the Hoeffding-type CS of Theorem~\ref{theorem:hoeffdingConfseq}, 
\[ C_t^H := \widehat \mu_t(\lambda_1^t) \pm \underbrace{\frac{\sum_{i=1}^t \psi_H(\lambda_i) + \log(2/\alpha)}{\sum_{i}\lambda_i \left ( 1 + \frac{i-1}{N-i+1} \right )}}_{\text{width } W_t}\]
In Section~\ref{section:boundedRealValued}, we presented the $\lambda$-sequence,
\begin{equation}
\label{eq:lambdaSequenceHoeffding}
    \lambda_t := \sqrt{\frac{8 \log (2/\alpha)}{t \log (t+1) (u - \ell)^2}} \land \frac{1}{u-\ell}.
\end{equation}
This is visually similar to the single value of $\lambda \in \mathbb R$,
\[ \lambda := \sqrt{\frac{8 \log(2/\alpha)}{t_0 (u-\ell)^2}} \]
which optimizes the bound for time $t_0$. Two natural questions arise: (1) where did the extra $\log(t)$ in \eqref{eq:lambdaSequenceHoeffding} come from, and (2) why this particular $\lambda$-sequence and not others? The answers to these questions are based on some heuristics derived by \citet{waudby2020variance} in the with-replacement setting. To make matters simpler, ignore the $\left (1 + \frac{i-1}{N-i+1} \right)$ term in the CS and consider the scaling of the width $W_t$,
\[ W_t \asymp \frac{\sum_{i=1}^t\psi_H(\lambda_i)}{\sum_{i=1}^t \lambda_i}  \asymp \frac{\sum_{i=1}^t \lambda_i^2}{\sum_{i=1}^t \lambda_i}.\]
When the method of mixtures is used to obtain CSs in the with-replacement setting, their widths often follow a $\sqrt{\log t / t}$ rate \cite{howard2018uniform}. Following the approximations in Table~\ref{tab:schedules}, we may opt to pick a sequence $(\lambda_i)_{i=1}^\infty$ which scales like $1/\sqrt{i \log i}$ to obtain a width $W_t \asymp \sqrt{\log t / t}$. In particular, scaling $\lambda_i$ as $1/\sqrt{i \log i}$ is simply an effort to obtain CSs with reasonable widths. The same arguments combined with \eqref{eq:psi-limit} can be applied to the empirical Bernstein CS to obtain \eqref{eq:empBernLambdaSeq}.

Furthermore, we truncate the $\lambda$-sequence in \ref{eq:lambdaSequenceHoeffding} to prevent the CS width from being dominated by large $\lambda_t$ at small $t$. It is important to keep in mind that \textit{any} sequence would have yielded a valid CS. The choice presented here was derived based on a heuristic argument and kept because of its reasonable empirical performance.

\begin{table}[!htbp]
\begin{center}
 \begin{tabular}{|c c c c|} 
 \hline
 Sequence $(\lambda_i)_{i=1}^\infty$ & $\sum_{i=1}^t \lambda_i$ & $\sum_{i=1}^t \lambda_i^2$ & Width $W_t$ \\ [0.5ex] 
 \hline
 \hline
 $\asymp 1/i $ & $\asymp \log t$ & $\asymp 1$ & $1/\log t$ \\
 \hline
 $\asymp \sqrt{\log i/i}$ & $\asymp \sqrt{t \log t}$ & $\asymp \log^2 t$ & $\asymp \log^{3/2} t / \sqrt{t}$ \\
 \hline
 $\asymp 1/{\sqrt i}$ & $\asymp \sqrt{t}$ & $\asymp\log t$ & $\asymp \log t / \sqrt{t}$ \\ 
 \hline
 $\asymp 1/\sqrt{i \log i} $ & $\asymp \sqrt{t/\log t}$  & $\asymp \log\log t$ & $\asymp \sqrt{\log t/t}$ \\
 \hline
 $\asymp 1/\sqrt{i \log i \log\log i} $ & $\asymp \sqrt{t/\log t}$  & $\asymp \log\log\log t$ & $\asymp \sqrt{\log t/t}$ \\
 \hline
\end{tabular}
\caption{Above, we think of $\log x$ as $1 \vee \log(1 \vee x)$ to avoid trivialities.
The claimed rates are easily checked by approximating the sums as integrals, and taking derivatives. For example, $\tfrac{d}{dx}\log \log x = 1/x \log x$, so the
sum of $\sum_{i \leq t} 1/i \log i \asymp \log\log t$. It is worth remarking that for $t=10^{80}$, the number of atoms in the universe, $\log \log t \approx 5.2$, which is why we treat $\log \log t$ as a constant when expressing the rate for $W_t$. The iterated logarithm pattern in the the last two lines of the table can be continued indefinitely.}\label{tab:schedules}
\end{center}
 \end{table}
 
 \begin{figure}[!htbp]
     \centering
     \includegraphics[width=0.7\textwidth]{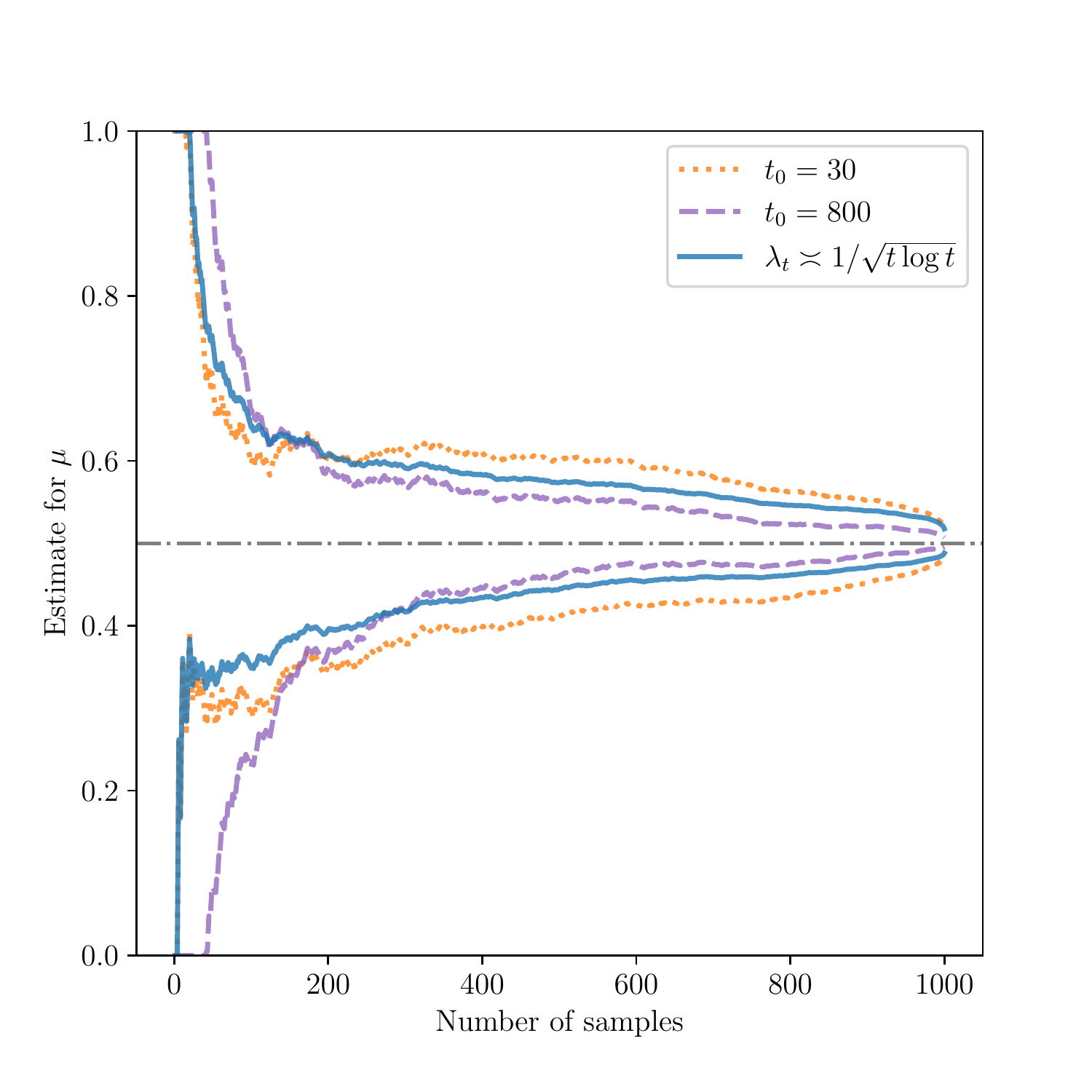}
     \caption{Hoeffding CSs based on fixed $\lambda$ values optimized for times $30$ and $800$, respectively alongside the CS based on the $\lambda$-sequence in \eqref{eq:lambdaSequenceHoeffding}. Notice that no CS uniformly dominates the others, but that the sequence in \eqref{eq:lambdaSequenceHoeffding} acts as a middle ground between the other two.}
     \label{fig:my_label}
 \end{figure}

\section{Comparing our CSs to those implied by Bardenet \& Maillard}

Bardenet \& Maillard \cite[Theorem 2.4]{bardenet2015concentration} provide the following two time-uniform Hoeffding-Serfling inequalities when sampling bounded real numbers WoR from a finite population. For any $n \in [N]$,
\begin{align*}
 \Pr \left ( \exists t \in \{ 1, \dots, n\} : \frac{1}{N-t} \sum_{i=1}^t(X_i - \mu) \geq \frac{n\epsilon}{N-n} \right ) &\leq \exp \left \{ -\frac{2n\epsilon^2 }{(1-(n-1)/N) (u-\ell)^2} \right \} \quad \text{and} \\
 \Pr \left (\exists t \in \{n, \dots, N-1\} : \frac{1}{t}\sum_{i=1}^t (X_i - \mu) \geq \epsilon \right ) &\leq \exp \left \{ -\frac{2n\epsilon^2}{(1-n/N)(1+1/n)(u-\ell)^2} \right \}.
 \end{align*}
Inverting these inequalities and taking a union bound to get two-sided inequalities, we have
\begin{align}\label{eq:BM-CS}
    \frac{1}{t}\sum_{i=1}^t X_i &\pm \frac{n(N-t)}{t(N-n)} \sqrt{\frac{\log (4/\alpha) (1-(n-1)/N) (u-\ell)^2 }{2n}} & \text{ when } t \leq n\\
    \frac{1}{t}\sum_{i=1}^t X_i &\pm \sqrt{ \frac{\log (4/\alpha)(1-n/N)(1+1/n)(u-\ell)^2 }{2n} } & \text{ when } t \geq n
    \label{eq:BM-CS2}
\end{align}
is a $(1-\alpha)$ CS for $\mu$. We term the CS defined by \eqref{eq:BM-CS} and \eqref{eq:BM-CS2} as the Bardenet-Maillard CS for simplicity.

A comparison of the aforementioned CS to our Hoeffding-type CS is displayed in Figure~\ref{fig:HSvsOurs}, where we see that our bound is roughly as tight as the Bardenet-Maillard CS at the time of optimization, while our bounds are (much) tighter everywhere else. This phenomenon was observed and studied in the with-replacement setting, attributing the benefits of confidence bounds like our Hoeffding CS to an underlying `line-crossing' inequality being uniformly tighter than an underlying Freedman-type inequality. For more information on the with-replacement analogy, we direct the reader to the pair of papers by Howard et al. \cite{howard2018uniform, howard_exponential_2018}. Returning back to the WoR setting, we remark that \eqref{eq:BM-CS} uses the standard sample mean, but we use a more sophisticated sample mean \eqref{eqn:withoutReplaceMeanEst}.

\begin{figure}[!htb]
    \centering
    \includegraphics[width=0.8\textwidth]{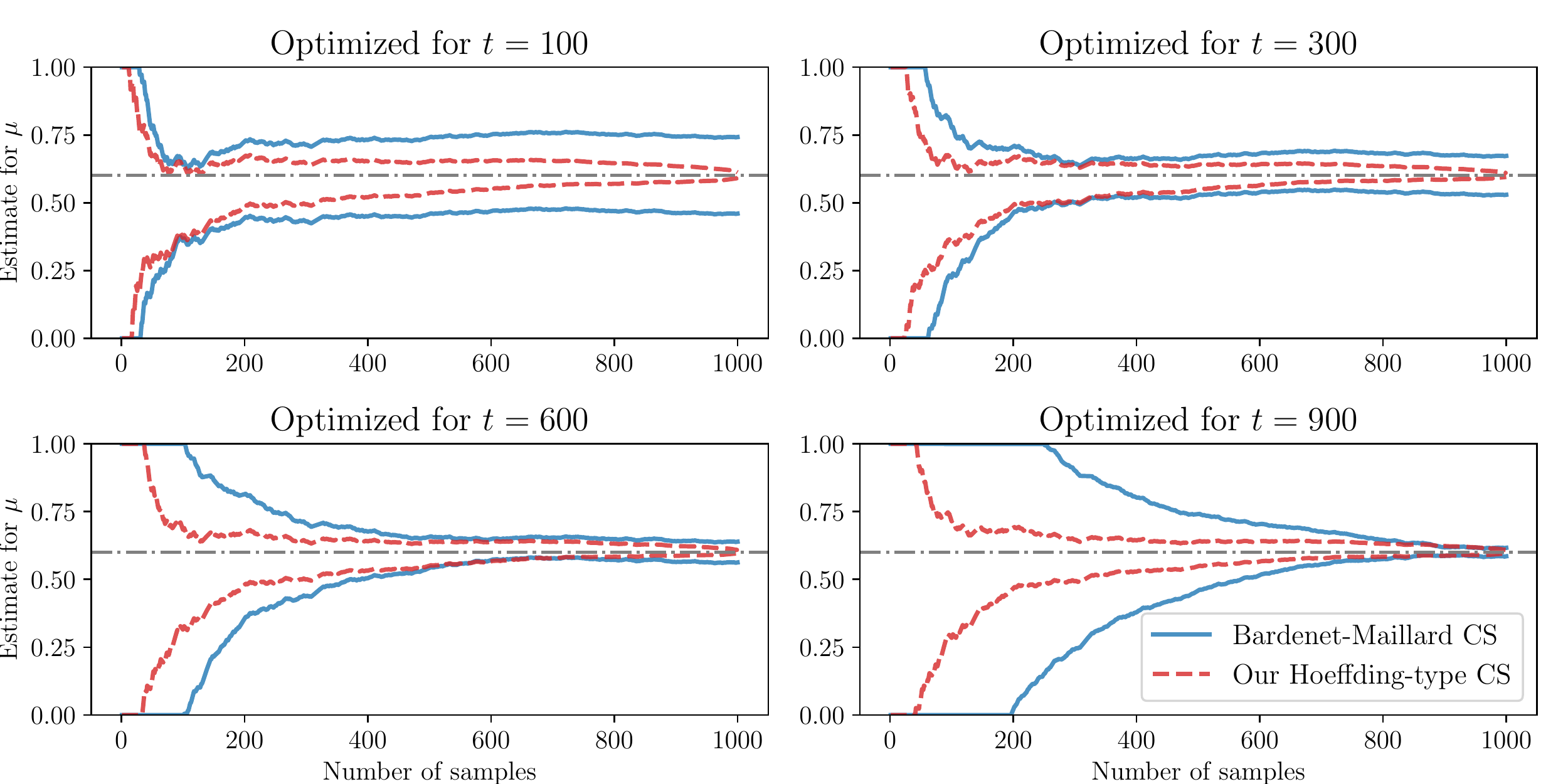}
    \caption{A comparison of our Hoeffding-type CS against the Hoeffding-Serfling CS of Bardenet \& Maillard \cite{bardenet2015concentration}. Our Hoeffding CS appears to be as tight as the Hoeffding-Serfling bound at the time of optimization, but tighter at all other times.}
    \label{fig:HSvsOurs}
\end{figure}

\newpage
\section{Time-uniform versus fixed-time bounds}

A natural question to ask is, `how much does one sacrifice by using a time-uniform CS instead of a fixed-time confidence interval'? The answer to this question will depend largely on the type of bound used, the underlying finite population, and other factors. However, in the case of sampling binary numbers from a finite population, it seems that the answer is `not much'. In Figure~\ref{fig:timeUniformFixedTime}, we display the fixed-time Hoeffding confidence interval of Corollary~\ref{corollary:hoeffdingCI} alongside its time-uniform counterpart from Theorem~\ref{theorem:hoeffdingConfseq} and the prior-posterior ratio CS from Theorem~\ref{theorem:hyperGeoConfseq}. In terms of the width of confidence bounds, we find that not much is lost by using the two aforementioned CSs over the fixed-time Hoeffding confidence interval. For this small price, the user is awarded the flexibility that comes with using CSs such as properties (a), (b), and (c) described in the Introduction.

\begin{figure}[!htb]
    \centering
    \includegraphics[width=0.75\textwidth]{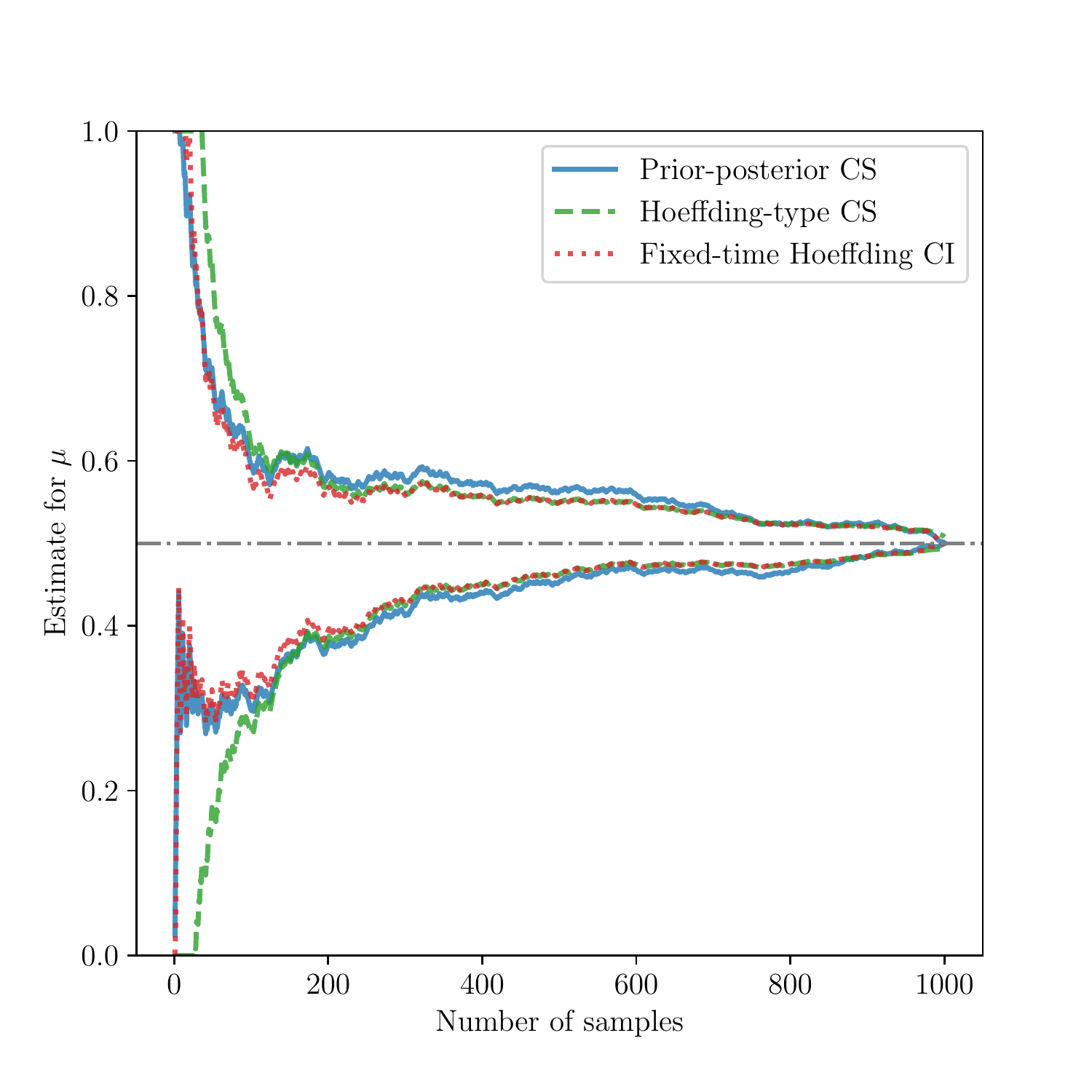}
    \caption{Comparing fixed-time and time-uniform confidence bounds for sampling binary numbers from a population of size 1000, consisting of 500 ones and 500 zeros. The dotted red line shows the fixed-time Hoeffding bound of Corollary~\ref{corollary:hoeffdingCI}, while the dashed green and solid blue lines refer to the time-uniform Hoeffding-type CS and the prior-posterior ratio CS, respectively. Notice that the increase in confidence bound width that results from using a time-uniform bound is relatively minor.}
    \label{fig:timeUniformFixedTime}
\end{figure}

\section{Computational considerations}

When using the CSs of Theorems~\ref{theorem:hyperGeoConfseq}, \ref{theorem:hoeffdingConfseq}, and \ref{theorem:empiricalBernsteinConfseq} in practice, it is important to keep in mind the computational costs associated with each method. For fixed values of $\lambda$, updating the Hoeffding and empirical Bernstein CSs at a each time $t$ takes constant time and constant memory, since all calculations involve cumulative sums (or averages). Furthermore, optimal values of $\lambda$ can be computed as in \eqref{eqn:lambdaOpt} for Hoeffding-type bounds and approximated as in \eqref{eqn:lambdaOpt2} for empirical Bernstein-type bounds, all in constant time.
On the other hand, the prior-posterior ratio (PPR) CS of Theorem~\ref{theorem:hyperGeoConfseq} is the more computationally expensive method among those presented, but can still be computed quickly for many problems. In order to find the CS, 
\[ C_t :=  \left \{ n^+ \in [N] : \frac{\pi_0(n^+)}{\pi_t(n^+)} < \frac{1}{\alpha} \right \}, \]
one must find all values in $\{ 0, \dots, N\}$ which, when provided as an input to $\frac{\pi_0(\cdot )}{\pi_t(\cdot )}$ are less than $1/\alpha$. 

Therefore, computing the entire CS takes $O(PN^2)$ time where $P$ is the time required to compute $\pi_0(n) / \pi_t(n)$. In all of the PPR CSs presented in this paper, we used computationally tractable conjugate priors, so $P=1$. We believe more sophisticated root-finding methods can be employed to arrive at a time of $O(N \log(PN))$, but these methods are reasonably fast in our experience. Moreover, the PPR CS can be computed on a subset of $[N]$ if needed, and is parallelizable. 

For reference, we provide average computation times in Table~\ref{tab:computationTimes}. All calculations were measured using Python's default \texttt{time} package and were performed in Python 3.8.3 using the \texttt{numpy} and \texttt{scipy} packages on a quad-core CPU with 8 threads at 1.8GHz each. However, no parallel processing was performed aside from the default multithreading provided by Python. 
\begin{table}[!htbp]
\centering
\begin{tabular}{|ll|}
\hline
                      & Time in seconds (std. dev.)                   \\
                      \hline
Hoeffding             & $2.13 \times 10^{-4}$ ($2.88\times 10^{-5}$)  \\
Empirical Bernstein   & $2.35 \times 10^{-4}$ ($3.24 \times 10^{-5}$) \\
Prior-posterior ratio & $0.306$ ($0.0115$) \\                 
\hline
\end{tabular}
\caption{Average time taken to compute the various CSs for $N = 1000$ discrete observations with equal numbers of ones and zeros, with standard deviations for 100 repeated experiments.}
\label{tab:computationTimes}
\end{table}

\section{Simple experiments for computing miscoverage rates}

\begin{figure}[!htb]
    \centering
    \includegraphics[width=\textwidth]{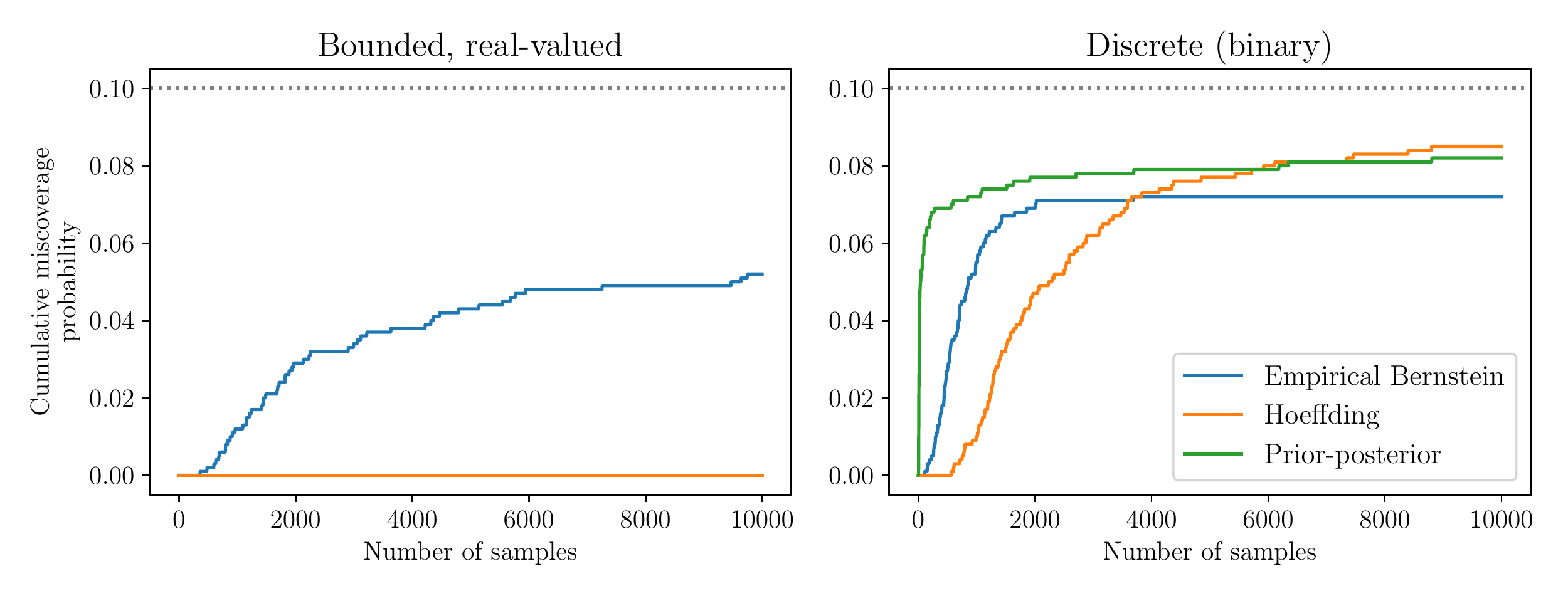}
    \caption{Empirical miscoverage probabilities for our empirical Bernstein, Hoeffding, and prior-posterior CSs. The left plot compares empirical Bernstein and Hoeffding for a population of $N=10,000$ consisting of bounded, real-valued observations uniformly distributed on the unit interval. The plot on the right-hand side compares all three for a population of the same size containing discrete elements with zeros and ones in equal proportions. Notice that while the empirical Bernstein CS does reasonably well in both settings, none of the three methods uniformly dominates the others.}
    \label{fig:cumulativeRejectionProb}
\end{figure}

Typically, in nonparametric testing, there is no `uniformly most powerful' test: any test achieving high power against some class of alternatives must necessarily be less powerful against some other class of alternatives, while a different test may display the opposite behavior. An analogous story holds for nonparametric estimation as well: the class of bounded random variables (or sequences of bounded random numbers) is nonparametric, and in such a setting, no single estimation technique can uniformly dominate all others (that is, always have lower width for any bounded sequence). This phenomenon is easy to exemplify for our confidence sequences: we can construct settings where the Hoeffding-type CS is less conservative (tighter estimation, more powerful as a test) than the empirical-Bernstein CS, and other settings in which the opposite is true.  Figure~\ref{fig:cumulativeRejectionProb} considers two such `opposite' scenarios: the binary setting which maximizes the variance of the sequence, and another setting in which the observations are uniformly distributed on $[0,1]$. In the first setting, there is no point in `estimating' the variance (empirical-Bernstein) as opposed to just assuming that it is the maximum possible variance (Hoeffding-type), and so the former is more conservative than the latter. In the second setting, the Hoeffding CS is far more conservative, as expected. With no prior knowledge on the type of sequence to be encountered, the empirical Bernstein CS seems like a safer choice.

\end{document}